\newcounter{NN}
\newtheorem{notation}[NN]{Notation}
\newtheorem{remark}[NN]{Remark}
\newtheorem{proposition}[NN]{Proposition}
\newtheorem{conjecture}[NN]{Conjecture}
\newtheorem{lemma}[NN]{Lemma}
\newtheorem{example}[NN]{Example}
\def\N{\mathbb{N}}
\def\A{\mathbf{A}}
\def\B{\mathbf{B}}
\def\Z{\mathbf{Z}}
\def\x{{\mathbf x}}
\def\f{{\mathbf f}}
\def\y{{\mathbf y}}
\def\b{{\mathbf b}}
\begin{document}

\title{Hypergraphs and Lotka-Volterra systems with linear Darboux polynomials}
\author{
Peter H.~van der Kamp\\[2mm]
Department of Mathematical and Physical Sciences,\\
La Trobe University, Victoria 3086, Australia.\\
Email: P.vanderKamp@LaTrobe.edu.au\\[7mm]
}

\date{orcid.org/0000-0002-2963-3528}

\maketitle

\begin{abstract}
We associate parametric classes of $n$-component Lotka-Volterra systems which admit $k$ additional linear Darboux polynomials, with admissible loopless hypergraphs of order $n$ and size $k$. We study the equivalence relation on admissible hypergraphs induced by linear transformations of the associated LV-systems, for $n\leq 5$. We present a new 13-parameter 5-component superintegrable Lotka-Volterra system, i.e. one that is not equivalent to a so-called tree-system. We conjecture that tree-systems associated with nonisomorphic trees are not equivalent, which we verified for $n<9$.
\end{abstract}

\section{Introduction}
In \cite{QTMK,trees} a one-to-one correspondence was established between trees on $n$ vertices and homogeneous $n$-component parametric classes of Lotka-Volterra (LV) systems with $n-1$ linear Darboux polynomials (DPs). These classes of so-called tree-systems contain $3n-2$ parameters and were shown to be maximally superintegrable, cf. \cite{KMMQ} where tree-systems were shown to be measure-preserving with a rational measure that is also preserved by their Kahan discretisation.

For tree-systems, the linear DPs  depend on only two variables, and they correspond to the edges of the associated tree. If we allow the DPs to depend on more than $m=2$ variables, the corresponding edges become hyperedges (connecting $m$ vertices), and hence the associated graphs become hypergraphs. An $n$-component LV-system with $k$ linear DPs can be identified with a hypergraph on $n$-vertices with $k$ hyperedges. However, not all hypergraphs correspond to a class of LV-systems. We will call the ones that do correspond to a class of LV-systems {\em admissible hypergraphs}. The hypergraph of an LV-system $A$ is not necessarily isomorphic to the hypergraph of an LV-system $B$ obtained from $A$ by a linear transformation of the variables. This is due to the fact that although DPs are invariant under linear transformations, the number of variables they depend on is not. Thus, linear transformations induce an equivalence relation on the set of admissible hypergraphs with a fixed number of hyperedges, and it is this equivalence relation that we are interested in. 

The paper is organised as follows. In the next section, we start with some preliminaries on DPs, and how these give rise to integrals for LV-systems. We provide necessary and sufficient  conditions on the matrix $\A$ and the vector $\b$ such that the linear combination, with nonzero coefficients $\alpha_i$ and distinct $i_j\neq i_k$ when $i\neq k$,
\begin{equation} \label{DP}
P_{i_1,\ldots,i_m} = \alpha_1 x_{i_1} + \alpha_2 x_{i_2} + \cdots + \alpha_m x_{i_m},
\end{equation}
is a DP for the {\em Lotka-Volterra system}
\begin{equation} \label{LVsys}
\frac{dx_i}{dt} = x_i\left(b_i + \sum_{j=1}^{n}A_{i,j}x_j\right),\quad i=1,\dots,n.
\end{equation}
Here the $b_i$ are the components of the $n$-dimensional vector $\b$, and $A_{i,j}$ the entries of the $n\times n$ matrix $\A$. We provide the general solution to these conditions and show that any other solution is a special case. Our classes of LV-systems depend several parameters whose values are not specified, and statements hold for generic values of these parameters. For specialized values of the parameters, it may happen that integrals are no longer functionally independent (cf. \cite[Theorem 26]{trees}), or that one or more DPs of the form \eqref{DP} are parametric families of DPs, cf. Prop. \ref{p3}. We restrict our study of classes of LV-systems with several DPs, to the study of the intersection of general solutions. In section \ref{s3}, we introduce the reader to hypergraphs. We give some examples of admissible and nonadmissible hypergraphs, as well as equivalent and nonequivalent ones (with respect to linear transformation of their associated LV-systems). In section \ref{s4}, we provide all equivalent classes of admissible hypergraphs on $n$ vertices, for $2\leq n\leq 5$, and for each of them we provide a representative LV-system. We have found one 5-component maximally superintegrable LV-system that is not equivalent to a tree-system. We expect there exist higher component maximally superintegrable LV-systems that are not equivalent to tree-systems. We have verified, for $n<9$, that $n$-component LV-systems whose trees are not isomorphic are not equivalent. We conjecture the statement is true for all $n$. 

\section{Linear Darboux polynomials for LV-systems} \label{s2}
A polynomial $P(\x)$ is a {\em Darboux Polynomial} for a ODE $\frac{d\x}{dt}=\f(\x)$ (where $\f$ is a polynomial vector), if there exists another polynomial $C(\x)$, called the {\em cofactor}, such that $\frac{dP}{dt} = CP$. One observes that each $n$-component LV system, Eq. \eqref{LVsys}, has at least $n$ DPs, namely the variables $x_i, \, i=1,\dots,n$.

An {\em integral} is a DP with cofactor 0. As a product of powers of DPs is a DP whose cofactor is a linear combination of the cofactors of the DPs, constructing integrals becomes a linear algebra problem once one has sufficiently many DPs, cf. \cite{Gor}. Thus one can say that LV systems are on the verge of having an integral; one more linear DP suffices. An elementary method to determine $k$ integrals from $k$ additional linear DPs for a homogeneous LV-system (Eq. \eqref{LVsys} with $\b=\mathbf{0}$), given in \cite[Section 2]{QTMK}, is the following. Let $\B$ be the $k\times n$ matrix whose $i$th row contains the coefficients of the cofactor of the $i$th DP, $P_i$, i.e. we have, for $i=1,\ldots,k$,  $\dot{P}_i=P_i\sum_j B_{i,j}x_j$. If $\A$ is nonsingular, then $k$ integrals are given by, with $\Z=-\B\A^{-1}$,
\begin{equation} \label{Ik}
P_{i}\prod_{j=1}^n x_j^{Z_{i,j}},\quad i=1,\ldots,k.
\end{equation}
For the general classes of LV-systems we consider, and for generic values of their parameters, we have $\det(\A)\neq 0$.
\begin{notation}
Let $\N_n=\{1,2,\ldots,n\}$. For all sets of $m$ distinct integers $I=\{i_1,\ldots,i_m\}\subset \N_n$ we denote $I^{\rm c}=\N_n \setminus I$. A DP of the form \eqref{DP}, i.e. linear in $x_i$ for all $i\in I$, with $\alpha_i\neq 0$, will be variously called an $I$-DP, or an $m$-DP.
\end{notation}
In \cite{QTMK}, it was also shown that 
\begin{equation} \label{2dp}
P_{i,k} = \alpha x_i + \beta x_k,\quad \alpha\beta\neq0,
\end{equation}
is a DP of Eq. \eqref{LVsys} if and only if, for some constant $b$ and all $j\in \{i,k \}^c$, we have
\begin{equation} \label{cost}
A_{i,j} = A_{k,j},\quad
b_i = b_k\ =\ b,\quad
\alpha (A_{k,k}-A_{i,k}) = \beta (A_{k,i}-A_{i,i}),
\end{equation}
and $(A_{k,k}-A_{i,k})(A_{k,i}-A_{i,i})\neq 0$. We note that this statement is only true when $\det(A)\neq0$, otherwise we also have that $P_{i,k}$ is a DP for all $\alpha,\beta$ if $A_{k,k}-A_{i,k}=A_{k,i}-A_{i,i}=0$ (which is a special case).

The expression \eqref{2dp} is a 2-DP and is associated to an edge in a graph. In \cite{trees}, it was shown that if one chooses $n-1$ 2-DPs so that the associated graph is a tree, there exists a matrix $\A$, with $3n-2$ free parameters, for which all constraint \eqref{cost} are satisfied, and, that for generic values of these parameters, the corresponding $n-1$ integrals are functionally independent \cite{trees}. We will now consider $m$-DPs with $m>2$.

\begin{lemma} \label{lem}
The expression $P_I$ given by \eqref{DP}, with nonzero coefficients $\alpha_j\neq 0,\ j=1,\ldots,m$, is an $m$-DP for \eqref{LVsys} if and only if the following conditions are satisfied:
\begin{align}
&\exists b\ \forall i\in I\ b_{i} = b \tag{C1} \label{cond1}\\
&\forall i,k\in I\ \forall j \in I^{\rm c}\ A_{i,j} = A_{k,j} \tag{C2} \label{cond2}\\
&\forall j,h\in \N_m\ \alpha_h (A_{i_h,i_j}-A_{i_j,i_j}) = \alpha_j (A_{i_h,i_h}-A_{i_j,i_h}) \quad \tag{C3} \label{cond3}
\end{align}
\end{lemma}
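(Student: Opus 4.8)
The plan is to use the method of undetermined coefficients, exploiting the fact that $P_I$ is homogeneous of degree one. First I would compute the derivative of $P_I$ along \eqref{LVsys},
$$\dot P_I = \sum_{h=1}^m \alpha_h x_{i_h}\Bigl(b_{i_h} + \sum_{j=1}^n A_{i_h,j}x_j\Bigr),$$
which is a polynomial of degree at most two with no constant term. Since $P_I$ is linear, the defining relation $\dot P_I = C P_I$ forces the cofactor to be affine linear, so I would write $C = \gamma + \sum_{p=1}^n c_p x_p$ with $\gamma$ and the $c_p$ to be determined, and then read off \eqref{cond1}--\eqref{cond3} by matching the coefficient of each monomial on the two sides of $\dot P_I = C P_I$.

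The monomials split into four groups, which I would treat in turn. The linear monomials $x_{i_h}$ match to give $\alpha_h b_{i_h} = \gamma\,\alpha_h$, hence $b_{i_h}=\gamma$ for every $h$ (using $\alpha_h\neq0$); this is \eqref{cond1} with $b=\gamma$. The mixed quadratic monomials $x_{i_h}x_j$ with $j\in I^{\rm c}$ each arise from a single term on each side, giving $\alpha_h A_{i_h,j}=\alpha_h c_j$, so $A_{i_h,j}=c_j$ is independent of $h$; this is \eqref{cond2}, and it simultaneously pins down $c_j=A_{i_h,j}$ for $j\in I^{\rm c}$. The square monomials $x_{i_h}^2$ give $\alpha_h A_{i_h,i_h}=\alpha_h c_{i_h}$, determining the remaining cofactor coefficients as $c_{i_h}=A_{i_h,i_h}$. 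Finally I would check that every monomial $x_p x_q$ with $p,q\in I^{\rm c}$ occurs with coefficient zero on both sides, so that these impose no further constraint.

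The crux is the fourth group, the cross quadratic monomials $x_{i_h}x_{i_j}$ with $h\neq j$ and both indices in $I$: here each side collects two contributions, yielding
$$\alpha_h A_{i_h,i_j} + \alpha_j A_{i_j,i_h} = c_{i_h}\alpha_j + c_{i_j}\alpha_h.$$
Substituting the values $c_{i_h}=A_{i_h,i_h}$ and $c_{i_j}=A_{i_j,i_j}$ found above and rearranging produces exactly
$$\alpha_h\bigl(A_{i_h,i_j}-A_{i_j,i_j}\bigr) = \alpha_j\bigl(A_{i_h,i_h}-A_{i_j,i_h}\bigr),$$
which is \eqref{cond3}. The bookkeeping in this step is the main obstacle: one must account for both orderings $\{p,i_l\}=\{i_h,i_j\}$ on each side and then feed back the diagonal-determined coefficients, rather than treating the $c_{i_h}$ as free parameters.

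I expect the forward direction (DP $\Rightarrow$ conditions) to drop out of these matchings directly. For the converse I would simply define $C$ by $\gamma=b$, $c_j=A_{i_h,j}$ for $j\in I^{\rm c}$ (well-defined by \eqref{cond2}), and $c_{i_h}=A_{i_h,i_h}$, and verify that \eqref{cond1}--\eqref{cond3} make all four families of coefficient identities hold, so that $\dot P_I = C P_I$. As a sanity check, specialising to $m=2$ should recover the conditions \eqref{cost} of the known 2-DP case.
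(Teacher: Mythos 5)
Your proposal is correct and follows essentially the same route as the paper: both expand $\dot P_I - CP_I$ with an affine cofactor ansatz, match coefficients of the linear, exterior-mixed, square, and interior-cross monomials to obtain \eqref{cond1}, \eqref{cond2}, the cofactor values $c_{i_h}=A_{i_h,i_h}$, and finally \eqref{cond3}. The coefficient bookkeeping in your cross-term step agrees with the paper's equation \eqref{cij} after substitution of \eqref{bet}, and the biconditional is handled the same way in both arguments.
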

\begin{proof}
The expression $P = P_I$, as given by Eq. \eqref{DP}, is a DP for LV-system \eqref{LVsys} iff there is an affine function
\[
C=\beta_0+\sum_{i=1}^n\beta_i x_i
\]
such that
\[
\dot{P}-CP=\sum_{i=1}^n d_ix_i +\sum_{1\leq i\leq j\leq n} c_{i,j} x_ix_j=0.
\]
Hence, $P$ is a  DP iff $d_i=0$ for all $i\in \N_n$ and $c_{i,j}=0$ for all $i,j \in \N_n$ such that $i\leq j$. We have
\[
d_i=
\begin{cases}
\alpha_i(nb_i-\beta_0)  & i\in I,\\
0                       & i\in I^{\rm c}.
\end{cases}
\]
Defining $b=\beta_0/n$, we have $b_i=b$ for all $i\in I$, \eqref{cond1}. For $i_k\in I$ and $j\in I^{\rm c}$ we have
$c_{i_k,j}=\alpha_k(A_{i_k,j}-\beta_j)$, the vanishing of which is equivalent to condition \eqref{cond2}, and it yields, for some choice of $h\in\N_m$, values of coefficients of the cofactor, $\beta_j=A_{i_h,j},\quad j\in I^{\rm c}$.
For $i_j\in I$, we find $c_{i_j,i_j}=\alpha_j(A_{i_j,i_j}-\beta_{i_j})$. So the values of the remaining coefficients of the cofactor are
\begin{equation} \label{bet}
\beta_j=A_{j,j},\quad j\in I.
\end{equation}
For $i,j\in I^{\rm c}$ the coefficients $c_{i,j}$ and $c_{i,i}$ are identically zero. For $i_k,i_j\in I$ and $i_j\neq i_k$, we have
\begin{equation} \label{cij}
c_{i_k,i_j}=(A_{i_k,i_j}-\beta_{i_j})\alpha_k+(A_{i_j,i_k}-\beta_{i_k})\alpha_j,
\end{equation}
which after substitution of \eqref{bet} yields \eqref{cond3}.
\end{proof}

We show, in Prop. \ref{p1}, how to obtain a general solution to the set of
equations \eqref{cond3}, under the constraint that $\alpha_j\neq 0,\ j=1,\ldots,m$.
We then show, in Prop. \ref{p2}, that any other general solution is a reparametrisation, and finally, in Prop \ref{p3}, that any other solution is a special case of the general one.

Given an index set $I$, a set of indices $\{i,j\}\subset I$ is called an edge, and a set of edges is called a graph on $I$. The graphs we consider are simple, i.e. $\{i,j\}$ is an edge implies $i\neq j$. The edges of a graph on $I=\{i_1,\ldots,i_m\}$ will be variously denoted by $e=\{x,y\}=\{i_{u},i_{v}\}\subset I$. The notation $e=\{x,y\}$ allows us to avoid having to use an index on an index, and the relation $x=i_{u}$ enables us to access the index $u$ of the element $x\in I=\{i_1,\ldots,i_m\}$. In terms of the complete graph on $I$, which has $m(m-1)/2$ edges and is denoted $K_I$, the set of equations \eqref{cond3} can be written as
\begin{equation} \label{ce}
\forall e\in K_I\ c(e)=0,
\end{equation}
where $c(e) := \alpha_{u}(A_{y,y}-A_{x,y}) - \alpha_{v} (A_{y,x}-A_{x,x})$. A tree $T$ on $I$ is a set of $m-1$ edges $e_j$, $j=1,\ldots, m-1$, such that $\cup_{j=1}^{m-1} e_j=I$. Given a tree $T$ on $I$, any two elements $x,y\in I$ are connected by the edges of a path in $T$, denoted $T_{x,y}\subset T$.
\begin{proposition}[General solution]
\label{p1} 
Let $T$ be a tree on $I$, such that for all $e=\{x,y\}\in T$ we have
\begin{equation} \label{neqs}
(A_{x,y}-A_{y,y})(A_{x,x}-A_{y,x})\neq 0.
\end{equation}
Then, for any $h\in \N_m$, a general solution to \eqref{cond3} is given by
\begin{equation} \label{sal}
\alpha_j=\prod_{e\in T_{i_h,i_j}}\frac{A_{x,y}-A_{y,y}}{A_{x,x}-A_{y,x}}\alpha_h,\quad \forall j\in \N_m,
\end{equation}
and, for all $e=(x,y) \in K_I\setminus T$, $x<y$,
\begin{equation} \label{sax}
A_{x,y}=A_{y,y}+\left(A_{x,x}-A_{y,x}\right)\prod_{e'\in T_{x,y}}\frac{A_{x',y'}-A_{y',y'}}{A_{x',x'}-A_{y',x'}}.
\end{equation}
\end{proposition}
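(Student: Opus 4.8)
The plan is to split the complete-graph system \eqref{ce} into the $m-1$ equations coming from the tree edges $e\in T$ and the equations coming from the chords $e\in K_I\setminus T$, and to show that the former determine the $\alpha_j$ (up to one free scale $\alpha_h$) via \eqref{sal}, while the latter, once this determination is made, carry no remaining $\alpha$-unknowns and collapse to the matrix constraints \eqref{sax}.

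First I would read off a single tree edge. For $e=\{x,y\}=\{i_u,i_v\}\in T$ the equation $c(e)=0$ is $\alpha_u(A_{y,y}-A_{x,y})=\alpha_v(A_{y,x}-A_{x,x})$, and by the nondegeneracy \eqref{neqs} both bracketed quantities are nonzero, so I may solve for the ratio
\[
\frac{\alpha_v}{\alpha_u}=\frac{A_{x,y}-A_{y,y}}{A_{x,x}-A_{y,x}}.
\]
Because $T$ is connected, every $i_j$ is joined to $i_h$ by a path $T_{i_h,i_j}$, and multiplying these edge-ratios along that path (each edge oriented away from $i_h$) telescopes to \eqref{sal}; acyclicity means there is no second relation to satisfy, so the assignment is consistent, and since each factor is nonzero we obtain $\alpha_j\neq0$ for every $j$, as the hypothesis $\alpha_j\neq0$ demands.

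The heart of the matter is the chord step. Fix a chord $e=\{x,y\}=\{i_u,i_v\}$ and let $w$ be the last common vertex of the tree paths $T_{i_h,x}$ and $T_{i_h,y}$. In the quotient $\alpha_v/\alpha_u$ computed from \eqref{sal} the shared segment $T_{i_h,w}$ cancels, leaving the product over $T_{w,y}$ divided by the product over $T_{w,x}$. Since reversing the orientation of an edge replaces its factor by its reciprocal, the denominator becomes the product over the reversed path $T_{x,w}$, and concatenating with $T_{w,y}$ gives precisely the oriented product over $T_{x,y}$, so that
\[
\frac{\alpha_v}{\alpha_u}=\prod_{e'\in T_{x,y}}\frac{A_{x',y'}-A_{y',y'}}{A_{x',x'}-A_{y',x'}}.
\]
Inserting this into the chord equation $\alpha_u(A_{y,y}-A_{x,y})=\alpha_v(A_{y,x}-A_{x,x})$ and solving for $A_{x,y}$ yields exactly \eqref{sax}; conversely, once \eqref{sax} holds the same computation runs backwards, so every chord equation is satisfied by the $\alpha_j$ of \eqref{sal}. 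Thus \eqref{sal}--\eqref{sax} do solve \eqref{cond3}. The tree-edge entries of $\A$ together with the scale $\alpha_h$ are left free, which is the sense in which this is a general solution; that no solutions are thereby omitted is the content of the subsequent Prop.~\ref{p2}.

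I expect the only real obstacle to be the orientation bookkeeping in the chord step: one must track, along each sub-path $T_{i_h,w}$, $T_{w,x}$ and $T_{w,y}$, which endpoint plays the role of $x$ and which of $y$, and confirm that reversal inverts a factor, so that the telescoped ratio $\alpha_v/\alpha_u$ matches term for term the product appearing in \eqref{sax}. Conceptually this is just the observation that setting $\phi_j=\log\alpha_j$ recasts the tree equations as the existence of a potential $\phi$ with prescribed edge-differences --- always possible because $T$ is acyclic --- while each chord equation is exactly the single-valuedness of that potential around the unique cycle the chord closes.
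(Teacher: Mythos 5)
Your proposal is correct and follows essentially the same route as the paper: split \eqref{ce} into tree-edge and chord equations, telescope the edge ratios along tree paths to get \eqref{sal}, and substitute the resulting ratio $\alpha_v/\alpha_u$ into each chord equation to obtain \eqref{sax}. The only difference is that the paper sidesteps your orientation bookkeeping at the branching vertex $w$ by simply invoking \eqref{sal} with base point $h=u$ for the chord step; your version makes explicit why the answer is independent of $h$, which the paper defers to Proposition \ref{p2}.
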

\begin{proof}
The set of equations \eqref{ce} is the disjoint union of sets
\[
\{c(e)=0, e\in T\} \cup \{c(e)=0, e\in K_I\setminus T\}.
\]
The first set of $m-1$ equations determines the coefficients of $P_I$, up to an constant, and the second set provides $(m-2)(m-1)/2$ linear relations amongst the elements of $\A$.
Let $T_{i_h,i_j}=\cup_{j=1}^r \{e_j\}$, where $e_j=\{x_j,y_j\}=\{i_{u_j},i_{v_j}\}$ with $u_1=h$ and $v_r=j$. Then we have
\begin{align*}
\alpha_j&=\frac{A_{x_r,y_r}-A_{y_r,y_r}}{A_{x_r,x_r}-A_{y_r,x_r}}\alpha_{u_r}\\
&=\frac{A_{x_r,y_r}-A_{y_r,y_r}}{A_{x_r,x_r}-A_{y_r,x_r}}
\frac{A_{x_{r-1},y_{r-1}}-A_{y_{r-1},y_{r-1}}}{A_{x_{r-1},x_r{r-1}}-A_{y_{r-1},x_{r-1}}}\alpha_{u_{r-1}}\\
&\ \ \vdots\\
&=\prod_{k=1}^r \frac{A_{x_k,y_k}-A_{y_k,y_k}}{A_{x_k,x_k}-A_{y_k,x_k}}\alpha_{u_1}\\
&=\prod_{e\in T_{i_h,i_j}} \frac{A_{x,y}-A_{y,y}}{A_{x,x}-A_{y,x}} \alpha_{h}.
\end{align*}
and we note that for $j=h$ we have $T_{i_h,i_j}=\emptyset$. This establishes \eqref{sal}.
Now, for $e\in K_I\setminus T$ we have $(A_{x,x}-A_{y,x})\alpha_{v}=(A_{x,y}-A_{y,y})\alpha_{u}$ which we can solve for $A_{x,y}$ or $A_{y,x}$ (none of the other equations depend on either of these). We choose
\[
A_{x,y}=A_{y,y}+(A_{x,x}-A_{y,x})\frac{\alpha_{v}}{\alpha_{u}},\quad x<y.
\]
Substitution of \eqref{sal}, with $h=u$, yields \eqref{sax}. The solution is a general solution as we have not assumed any relations on the elements of $\A$, apart from the inequalities \eqref{neqs}.
\end{proof}

\begin{proposition}[Reparametrisation]\label{p2}
Any other general solution is a reparametrisation.
\end{proposition}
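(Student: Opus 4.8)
The plan is to view each general solution furnished by Prop.~\ref{p1} as a rational parametrisation of one and the same (tree-independent) variety, and then to obtain the reparametrisation by composing one parametrisation with the rational inverse of another. Throughout fix $I$ with $|I|=m$, and for an edge $e=\{x,y\}=\{i_u,i_v\}\in K_I$ abbreviate $\rho(x\to y):=\frac{A_{x,y}-A_{y,y}}{A_{x,x}-A_{y,x}}$, so that $\rho(y\to x)=\rho(x\to y)^{-1}$ and the system \eqref{cond3}, equivalently \eqref{ce}, becomes simply $\alpha_{v}/\alpha_{u}=\rho(x\to y)$ on every edge. In these terms \eqref{ce} says exactly that the multiplicative vertex potential $j\mapsto\alpha_j$ realises the prescribed edge ratios $\rho$, so that prescribing $\rho$ on a spanning tree determines the potential up to one global scale while the off-tree edges contribute only the cycle-consistency relations; this is the decomposition already used in the proof of Prop.~\ref{p1}. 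A choice of tree $T$ and base $h$ thus yields a map $\Phi_{T,h}$ from its free parameters---the diagonal entries $A_{i,i}$, the entries $A_{x,y},A_{y,x}$ on the $m-1$ tree edges, the reverse entries $A_{y,x}$ with $x<y$ on the $(m-1)(m-2)/2$ off-tree edges, and the scalar $\alpha_h$---to the full tuple $(\alpha_1,\dots,\alpha_m,\{A_{i,j}\}_{i,j\in I})$, the remaining coordinates being supplied by \eqref{sal} and \eqref{sax}. Since the free parameters of $\Phi_{T,h}$ are themselves among its output coordinates, $\Phi_{T,h}$ admits the tautological rational one-sided inverse ``read off the free coordinates'', so a reparametrisation between $\Phi_{T,h}$ and $\Phi_{T',h'}$ will appear automatically once I show that all these maps have the same image.

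Because \eqref{sax} may send an off-tree entry to a value at which the nondegeneracy degenerates, the statement is naturally birational: I would let $V^{\circ}$ be the tree-independent locus of tuples satisfying \eqref{ce}, the all-edge nondegeneracy $(A_{x,y}-A_{y,y})(A_{x,x}-A_{y,x})\neq0$ for every $e\in K_I$, and $\alpha_h\neq0$, and let $U_{T,h}$ be the open set of free parameters satisfying \eqref{neqs} on $T$, $\alpha_h\neq0$, and $A_{x,x}-A_{y,x}\neq0$ on the off-tree edges. I would then prove $\Phi_{T,h}:U_{T,h}\to V^{\circ}$ is a bijection. For the inclusion into $V^{\circ}$: by construction the image satisfies the tree equations and, through \eqref{sax}, the off-tree equations, hence all of \eqref{ce}; moreover \eqref{sal} keeps every $\alpha_j$ nonzero, so on each edge $\rho(x\to y)=\alpha_v/\alpha_u$ is finite and nonzero, which forces both factors of the nondegeneracy product to be nonzero on \emph{every} edge, not merely on $T$. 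For surjectivity: a point of $V^{\circ}$ has, in particular, the tree-edge nondegeneracy \eqref{neqs} and nonvanishing off-tree denominators, so its diagonal entries, tree entries, reverse off-tree entries and value $\alpha_h$ form a point of $U_{T,h}$; feeding these into $\Phi_{T,h}$ returns the original point, because \eqref{ce} forces the determined coordinates to equal precisely the values produced by \eqref{sal} and \eqref{sax}. Injectivity is immediate, the free parameters being output coordinates.

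With $\Phi_{T,h}:U_{T,h}\xrightarrow{\ \sim\ }V^{\circ}$ for every admissible $(T,h)$, the proposition follows: $\Phi_{T',h'}^{-1}\circ\Phi_{T,h}:U_{T,h}\to U_{T',h'}$ is a rational isomorphism with rational inverse $\Phi_{T,h}^{-1}\circ\Phi_{T',h'}$, i.e.\ exactly a reparametrisation. Two special cases clarify the mechanism. Changing only the base, $h\to h'$, is inessential: by the unique-path property of a tree the ratios $\alpha_j/\alpha_k$ in \eqref{sal} are independent of $h$, so the move merely rescales the single free scalar. Changing the tree can be carried out one elementary move at a time, since any two spanning trees of $K_I$ are joined by a sequence of edge swaps $T'=(T\setminus\{e\})\cup\{e'\}$; under such a swap the two entries of $e'$ pass from determined to free and those of $e$ from free to determined, the explicit substitution being dictated by how the path products in \eqref{sal}--\eqref{sax} change across the cut that removing $e$ induces.

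The step I expect to be the main obstacle is the bookkeeping underlying the two inclusions: checking cleanly that the full $K_I$ system collapses to the $m-1$ tree equations together with the off-tree relations \eqref{sax}, and that the nondegeneracy propagates from the chosen tree to all edges, so that the transition maps are defined on precisely the common locus $V^{\circ}$ and never require dividing by a quantity not already guaranteed nonzero by \eqref{neqs}. Everything beyond this is the formal composition of rational maps whose one-sided inverses are coordinate projections.
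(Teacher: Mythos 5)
Your proposal is correct and follows essentially the same route as the paper: both arguments rewrite the constraints in terms of the edge ratios $\frac{A_{x,y}-A_{y,y}}{A_{x,x}-A_{y,x}}$ to exhibit the solution set as tree-independent (the paper's Eq.~\eqref{pi}), and both observe that the choice of $(T,h)$ merely selects which entries of $\A$ and which $\alpha$ serve as free parameters, so that the transition between two such choices is a rational reparametrisation obtained by reading off coordinates. Your version simply makes explicit the birational bookkeeping (the loci $U_{T,h}$ and $V^{\circ}$) that the paper leaves implicit under its standing genericity assumption.
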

\begin{proof}
The solution obtained in Prop. \ref{p1} depends on $m^2-(m-1)(m-2)/2=m+2(m-1)+(m-1)(m-2)/2$ parameters, namely
\[
\{A_{x,x}: x\in I\} \cup \{A_{x,y}: \{x,y\}\in T\} \cup \{A_{y,x}: \{x,y\}\in K_I\setminus T,\ y>x\}.
\]
We rewrite the constraints on $\A$, cf. \eqref{sax}, as
\begin{equation} \label{pi}
\frac{A_{x,y}-A_{y,y}}{A_{x,x}-A_{y,x}}=\prod_{e'\in T_{x,y}}\frac{A_{x',y'}-A_{y',y'}}{A_{x',x'}-A_{y',x'}},\ \forall \{x,y\}\in K_I\setminus T.
\end{equation}
In fact, \eqref{pi} also holds for $\{x,y\}\in T$ (trivially). As the left-hand-side of \eqref{pi} does not depend on $T$, the right-hand-side does not depend on $T$, and neither does \eqref{sal}. The choice of a tree corresponds to a selection of the parameters.
\end{proof}

\begin{remark} \label{oc}
A nice choice of tree for the general solution is the bushy tree, $T=\cup_{i_j\neq i_h} \{\{i_j,i_h\}\}$. The expressions for the coefficients of $P_I$ in terms of $\alpha_h$ each involve only one ratio, and the constraints on $\A$ all have degree 3, i.e. $\forall j,k\in \N_m\setminus\{h\}$ $C_{h,j,k}=0$, where, cf. \cite[Eq. (1.5)]{Mai},
\begin{equation}\label{rc}
\begin{split}
C_{h,j,k}&=(A_{i_k,i_h}-A_{i_h,i_h})(A_{i_j,i_j}-A_{i_h,i_j})
(A_{i_j,i_k}-A_{i_k,i_k}) \\
&\ \ \ +(A_{i_j,i_h}-A_{i_h,i_h})
(A_{i_k,i_j}-A_{i_j,i_j})(A_{i_k,i_k}-A_{i_h,i_k}).
\end{split}
\end{equation}
\end{remark}

We will next look for special solutions, where not all the coefficients are uniquely determined up to a constant. If for some $e=\{x,y\}=\{i_u,i_v\}$ we would have $A_{x,y}-A_{y,y}\neq 0$, but $A_{x,x}-A_{y,x}= 0$, then $T$ cannot contain the edge $e$, as this would imply $\alpha_{u}=0$. For there to be no relation between the nonvanishing coefficients $\alpha_{u}$ and $\alpha_v$ we need
$A_{x,y}-A_{y,y}=A_{x,x}-A_{y,x}= 0$ but also,
\[
\prod_{e'\in T_{x,y}} A_{x',y'}-A_{y',y'}= \prod_{e'\in T_{x,y}} A_{x',x'}-A_{y',x'} = 0,
\]
for all possible paths $T_{x,y}$. The latter would be the case if $I=J\cup H$ where $x\in J$ and $y\in H$ and $\forall j\in J$ $\forall h\in H$ $A_{j,h}-A_{h,h}=A_{j,j}-A_{h,j}= 0$.

\begin{proposition}[Special solution] \label{p3}
Let $I=J\cup H$ and $\forall j\in J\ \forall h\in H$ $A_{j,h}-A_{h,h}=A_{j,j}-A_{h,j}= 0$. A special solution can be obtained as follows. Let $T$ be a tree on $J$ and $U$ be a tree on $H$ such that for all $e\in T\cup U$ we have $(A_{x,y}-A_{y,y})(A_{x,x}-A_{y,x})\neq 0$.
Then, with $i_h\in J$ and $i_k\in H$, a special solution to \eqref{cond3} is given by
\begin{align*}
\alpha_j&=\prod_{e\in T_{i_h,i_j}}\frac{A_{x,y}-A_{y,y}}{A_{x,x}-A_{y,x}}\alpha_h,\quad \forall i_j\in J,\\
\alpha_j&=\prod_{e\in U_{i_k,i_j}}\frac{A_{x,y}-A_{y,y}}{A_{x,x}-A_{y,x}}\alpha_k,\quad \forall i_j\in H,
\end{align*}
and, with $x<y$,
\begin{align*}
A_{x,y}&=A_{y,y}+\left(A_{x,x}-A_{y,x}\right)\prod_{e'\in T_{x,y}}\frac{A_{x',y'}-A_{y',y'}}{A_{x',x'}-A_{y',x'}},\ \forall e \in K_J\setminus T,\\
A_{x,y}&=A_{y,y}+\left(A_{x,x}-A_{y,x}\right)\prod_{e'\in U_{x,y}}\frac{A_{x',y'}-A_{y',y'}}{A_{x',x'}-A_{y',x'}},\ \forall e \in K_H\setminus U.
\end{align*}
The corresponding LV-system (for which \eqref{cond1} and \eqref{cond2} are satisfied) has a $J$-DP and a $H$-DP, the sum of which is an $I$-DP.
\end{proposition}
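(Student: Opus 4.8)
The plan is to reduce the conditions of Lemma~\ref{lem} for the index set $I$ to the corresponding conditions for the two smaller sets $J$ and $H$ separately, and then to invoke Proposition~\ref{p1} on each of them. Throughout I take $J$ and $H$ to be disjoint with $J\cup H=I$, and I assume, as the statement does, that the LV-system has been set up so that \eqref{cond1} and \eqref{cond2} hold for $I$; what remains is \eqref{cond3}, which I write in the edge form \eqref{ce}, so that I must check $c(e)=0$ for every $e\in K_I$.

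First I would examine the \emph{cross edges}. Let $e=\{x,y\}\in K_I$ with $x\in J$ and $y\in H$. Reading the hypothesis $A_{j,h}-A_{h,h}=A_{j,j}-A_{h,j}=0$ with $j=x$ and $h=y$ gives $A_{x,y}=A_{y,y}$ and $A_{y,x}=A_{x,x}$, so both factors $A_{y,y}-A_{x,y}$ and $A_{y,x}-A_{x,x}$ in $c(e)=\alpha_{u}(A_{y,y}-A_{x,y})-\alpha_{v}(A_{y,x}-A_{x,x})$ vanish. Hence $c(e)=0$ identically, independently of the coefficients. This is the key observation: every cross edge imposes no constraint, so the system \eqref{ce} on $K_I$ splits into the disjoint union of the systems on $K_J$ and on $K_H$, with no coupling whatsoever between the two blocks of coefficients.

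Next I would solve each block with Proposition~\ref{p1}. Applying it to $J$ with the tree $T$ and anchor $i_h$ produces exactly the stated $\alpha_j$ for $i_j\in J$, together with the stated constraints on $A_{x,y}$ for $e\in K_J\setminus T$; applying it to $H$ with $U$ and anchor $i_k$ does the same for $H$. The non-degeneracy hypothesis $(A_{x,y}-A_{y,y})(A_{x,x}-A_{y,x})\neq 0$ on the edges of $T\cup U$ is precisely the inequality \eqref{neqs} required by Proposition~\ref{p1} on each piece. Because the two blocks are decoupled, the anchors $\alpha_h$ and $\alpha_k$ enter as \emph{independent} free constants; this extra scaling freedom is what distinguishes this special solution from the general one of Proposition~\ref{p1}, where all coefficients are tied to a single constant. (A tree on all of $I$ would have to join $J$ to $H$ through a cross edge, which is forbidden by \eqref{neqs} precisely because cross edges are degenerate.)

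It then remains to read off the Darboux polynomials. I would verify that $P_J=\sum_{i_j\in J}\alpha_j x_{i_j}$ is a $J$-DP by checking \eqref{cond1}--\eqref{cond3} for the index set $J$: condition \eqref{cond1} is inherited from $I\supseteq J$, and \eqref{cond3} for $J$ is exactly what Proposition~\ref{p1} just solved. The delicate point, and the step I expect to require the most care, is \eqref{cond2} for $J$, because $J^{\rm c}=H\cup I^{\rm c}$ is strictly larger than $I^{\rm c}$: for $j\in I^{\rm c}$ it is inherited from \eqref{cond2} for $I$, but for $j\in H$ it must be obtained from the hypothesis, via $A_{i,j}=A_{j,j}=A_{k,j}$ for $i,k\in J$ (using $A_{j',h}=A_{h,h}$ with $h=j\in H$). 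By symmetry $P_H$ is an $H$-DP. Finally, since the within-$J$, within-$H$, and cross equations of \eqref{ce} all hold for the combined coefficients, the full expression $P_I=P_J+P_H$ satisfies \eqref{cond1}--\eqref{cond3} for $I$ and is therefore an $I$-DP, as claimed.
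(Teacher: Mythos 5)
Your proof is correct and follows essentially the same route as the paper: the decisive step in both is checking condition \eqref{cond2} for $P_J$ (resp.\ $P_H$) on the enlarged complement $J^{\rm c}=I^{\rm c}\cup H$, using $A_{j,h}=A_{h,h}$ (resp.\ $A_{h,j}=A_{j,j}$) to cover the new part. The only difference is that you spell out why \eqref{cond3} holds --- the cross edges give $c(e)=0$ identically, so \eqref{ce} decouples into independent $K_J$ and $K_H$ blocks solved by Proposition \ref{p1} --- whereas the paper simply asserts this, relying on the discussion preceding the proposition.
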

\begin{proof} The conditions \eqref{cond3} are satisfied, so $P_I$ is an $m$-DP whose coefficients depend on 2 free parameters, $\alpha_h$ and $\alpha_k$. The expression $P_J=\sum_{i_j\in J}\alpha_jx_{i_j}$ is a $J$-DP and $P_H=\sum_{i_j\in H}\alpha_jx_{i_j}$ is a $H$-DP, as the conditions in Lemma \ref{lem} (with $I$ replaced by $J$ resp. $H$) are satisfied. Note that for $P_J$ the condition \eqref{cond2} reads
\[
\forall i,k\in J\ \forall j \in J^{\rm c}\ A_{i,j} = A_{k,j},
\]
and we have $J^{\rm c}=I^c\cup H$. The $I^c$ part is covered by condition \eqref{cond2} for $P_I$ and the $H$ part is covered by $\forall j\in J\ \forall h\in H$ $A_{j,h}-A_{h,h}=0$. Similarly for $P_H$ the $J$-part of \eqref{cond2} holds due to $\forall j\in J\ \forall h\in H$ $A_{j,j}-A_{h,j}=0$.
\end{proof}
If $J$ has $f$ elements and $H$ has $g$ elements, the special solution described in Prop. \ref{p3} is a $(f(f + 3)+g(g+3))/2-2$ parameter subfamily of the general solution. It should be clear that we can further specialise the parameters in special solutions to obtain more than 2 free parameters in the $m$-DP.\\

In the next example, we consider the case $m=3$ and $n=4$.
\begin{example}
We describe when
$P=\alpha_1x_1+\alpha_2x_2+\alpha_3x_3$ is a DP of the system \eqref{LVsys}, with
\begin{equation} \label{NS}
\b=\begin{pmatrix} b_1 \\ b_2 \\ b_3 \\ b_4 \end{pmatrix},\quad \A=\begin{pmatrix}
a_{1} & a_{2} & a_{3} & a_{4} 
\\
 a_{5} & a_{6} & a_{7} & a_{8} 
\\
 a_{9} & a_{10} & a_{11} & a_{12} 
\\
 a_{13} & a_{14} & a_{15} & a_{16} 
\end{pmatrix}.
\end{equation}
The condition \eqref{cond1} tells us that we need $b_1=b_2=b_3$. The condition \eqref{cond2} tells us that we need $a_{4}=a_{8}=a_{12}$. And \eqref{cond3} comprises
\begin{align}
\alpha_{1} \left(a_{6}-a_{2}\right)&= 
\alpha_{2} \left(a_{5}-a_{1}\right)
,\label{c1}\\
\alpha_{1} \left(a_{11}-a_{3}\right)&= 
\alpha_{3} \left(a_{9}-a_{1}\right)
,\label{c2}\\
\alpha_{2} \left(a_{11}-a_{7}\right)&= 
\alpha_{3} \left(a_{10}-a_{6}\right)
.\label{c3}
\end{align}
We take
\begin{equation} \label{cds}
(a_{1}-a_{5})(a_{2}-a_{6})(a_{6}-a_{10})(a_{7} -a_{11})\neq 0,
\end{equation}
which corresponds to $T=\{\{1,2\},\{1,3\}\}$. Solving \eqref{c1},\eqref{c3} for $\alpha_1,\alpha_3$ gives
\[
\alpha_{1} = 
\frac{a_{1}-a_{5}}{a_2-a_{6}}\alpha_{2},\quad \alpha_{3}
 = \frac{a_{7}-a_{11}}{a_{6}-a_{10}}\alpha_{2} 
\]
and the constraint on $\A$, \eqref{c2}, becomes
\[
\left(a_{1}-a_{5}\right) \left(a_3-a_{11}\right) \left(a_{6}-a_{10}\right)
 = 
\left(a_{7}-a_{11}\right) \left(a_1-a_{9}\right) \left(a_2-a_{6}\right),
\]
i.e. $C_{1,2,3}=0$ with $C_{1,2,3}$ given by \eqref{rc}. Solving for $
a_3=a_{11}+\frac{\left(a_{7}-a_{11}\right) \left(-a_{9}+a_{1}\right) \left(-a_{6}+a_{2}\right)}{\left(a_{1}-a_{5}\right) \left(a_{6}-a_{10}\right)}$
yields a 15 parameter family of LV-systems, with
\begin{equation} \label{JOS}
\b=\begin{pmatrix} b_1 \\ b_1 \\ b_1 \\ b_4 \end{pmatrix},\quad
\A=\begin{pmatrix}
a_{1} & a_{2} & a_{11}+\frac{\left(a_{7}-a_{11}\right) \left(-a_{9}+a_{1}\right) \left(-a_{6}+a_{2}\right)}{\left(a_{1}-a_{5}\right) \left(a_{6}-a_{10}\right)} & a_{4} 
\\
 a_{5} & a_{6} & a_{7} & a_{4} 
\\
 a_{9} & a_{10} & a_{11} & a_{4} 
\\
 a_{13} & a_{14} & a_{15} & a_{16} 
\end{pmatrix}
\end{equation}
which admits the 3-DP,
\[
\left(a_{6}-a_{10}\right) \left(a_{1}-a_{5}\right)x_{1}
+\left(a_2-a_{6}\right) \left(a_{6}-a_{10}\right)x_{2}
+\left(a_2-a_{6}\right) \left(a_{7}-a_{11}\right) x_{3},
\]
where we have set $\alpha_{2} = \left(a_2-a_{6}\right) \left(a_{6}-a_{10}\right)$. We obtain a special solution if, choosing $J=\{1\}$, $H=\{2,3\}$, we impose
\begin{equation} \label{so}
a_{5} = a_{1},\
a_{2} = a_{6},\
a_{9} = a_{1},\
a_{3} = a_{11}.
\end{equation}
Now the conditions \eqref{c1} and \eqref{c2} are satisfied, and \eqref{c3} yields
$
\alpha_3=\frac{a_{7}-a_{11}}{a_{6}-a_{10}}\alpha_{2}$.
This gives a 12 parameter subfamily of LV-systems, with
\begin{equation} \label{SOS}
\b=\begin{pmatrix} b_1 \\ b_1 \\ b_1 \\ b_4 \end{pmatrix},\quad
\A=\begin{pmatrix} 
a_{1} & a_{6} & a_{11} & a_{4} 
\\
 a_{1} & a_{6} & a_{7} & a_{4} 
\\
 a_{1} & a_{10} & a_{11} & a_{4} 
\\
 a_{13} & a_{14} & a_{15} & a_{16} 
\end{pmatrix}
\end{equation}
which admits the 3-DP (we set $\alpha_1=c_1$, $\alpha_2=(a_6-a_{10})c_2$),
\[
c_1x_1+c_2\left((a_6-a_{10})x_2+(a_{7}-a_{11})x_3\right),
\]
where $c_1,c_2$ are arbitrary constants.
The more special case where besides \eqref{so} we also have $a_{10}=a_6$ and $a_{7}=a_{11}$ gives rise to a class of LV-systems, which admits the 3-DP $c_1x_1+c_2x_2+c_3x_3$, where $c_1,c_2,c_3$ are free parameters. 
\end{example}

In this section, we have studied LV-systems with one $I$-DP. In the remaining of this paper, we will study general classes of LV-systems with several $I_i$-DPs, with $I_i\subset \N_n$, $i=1\ldots k\in \N$, by which we mean classes that are the intersection of the general solutions for each $i$. We will not study special subclasses. (In the next section, we do provide a generic example of a general class and, in example 8, we mention one specific special subclass.)

\section{Admissible hypergraphs, equivalence} \label{s3}
By a hypergraph on $n$ vertices (undirected) we mean a subset of the powerset of $N=\{1,2,\ldots,n\}$. The number of vertices is the order and the number of (hyper)edges is the size of the hypergraph. An example of a hypergraph of order $3$ and size $6$ is
$
\{\{1\},\{2\},\{3\},\{1,3\},\{1,2,3\}\}.
$
This would be the hypergraph associated to the $3$-component LV-system with two additional Darboux polynomials
\begin{equation} \label{2dps}
x_1+2x_3,\quad 3x_1+4x_2+6x_3.
\end{equation}
Two hypergraphs are isomorphic if they can be obtained from each other by a permutation of $N$, e.g. the above hypergraph is isomorphic to $\{\{1\},\{2\},\{3\},\{2,3\},\{1,2,3\}\}$.
Because we can relabel the variables of the LV-system, we are only interested in nonisomorphic hypergraphs. Moreover, because we restrict ourselves to the study of LV-systems, and their hypergraphs always contain the full set of singletons $\{\{1\},\ldots,\{n\}\}$, we may as well omit them. In the sequel we assume our hypergraphs do not contain loops (and also no multiple edges). Thus, to the LV-system with DPs \eqref{2dps} we associate the loopless hypergraph of size 2,
$
\{\{1,3\},\{1,2,3\}\},
$
reflecting only the additional DPs. The number of nonisomorphic hypergraphs of order $n$ grows quite rapidly \cite[A000612]{OEIS}
\[
1, 2, 6, 40, 1992, 18666624, 12813206169137152, 33758171486592987164087845043830784, \ldots.
\]
The triangular array counting the number of hypergraphs of order $n$ and size $k$ is given by
\cite[A371830]{OEIS}. Excluding  singletons, the number of loopless hypergraphs of order $n$ is \cite[A317794]{OEIS}
\[
1, 1, 2, 8, 180, 612032, 200253854316544, 263735716028826427534807159537664, \ldots.
\]
We have generated all loopless hypergraphs of order $n$ and size $k$ for $n\leq 5$ and, when $n=5$, $k\leq 7$. Their number is given in Table \ref{nhg}, cf. \cite{HS}.

\begin{table}[h]
\begin{center}
\begin{tabular}{c|cccccccccccc}
$n \setminus k$ & 0 & 1 & 2 & 3 & 4 & 5 & 6 & 7 & 8 & 9 & 10 & 11\\
\hline
2 & 1 & 1 & & & & & & & & & & \\
3 & 1 & 2 & 2 & 2 & 1 & & & & & & &\\
4 & 1 & 3 & 7 & 16 & 28 & 35 & 35 & 28 & 16 & 7 & 3 & 1\\
5 &1 & 4 & 15 & 62 & 243 & 841 & 2544 & 6672 & & & \\
\end{tabular}
\vspace{3mm}

\parbox{12.5cm}{\caption{\label{nhg} The number of loopless hypergraphs of order $n$ and size $k$.}}
\end{center}
\end{table}
\vspace{-8pt} 

For $n=3$, the set of nonisomorphic hypergraphs is given in Fig. \ref{nih3}.
\begin{figure}[h]
\centering
\begin{subfigure}{0.15\textwidth}
\begin{tikzpicture}[scale=1]
\tikzstyle{nod}= [circle, inner sep=0pt, fill=white, minimum size=2pt, draw]		
\node[nod] (a) at (1,0) {};
\node[nod] (b) at (2,0) {};
\node[nod] (c) at (3/2,1) {};	
\node at (3/2,-1) {$k=0$};
\end{tikzpicture}
\end{subfigure}
\begin{subfigure}{0.15\textwidth}
\begin{tikzpicture}[scale=1]
\tikzstyle{nod}= [circle, inner sep=0pt, fill=white, minimum size=2pt, draw]		
\node[nod] (a) at (1,0) {};
\node[nod] (b) at (2,0) {};
\node[nod] (c) at (3/2,1) {};
\draw[line width=1pt] (a) --  (b);
\node[nod] (aa) at (1,2) {};
\node[nod] (bb) at (2,2) {};
\node[nod] (cc) at (3/2,3) {};
\draw[line width=1pt] (1.14,2.1) --  (1.86,2.1) -- (1.5,2.83) -- (1.14,2.1);
\node at (3/2,-1) {$k=1$};	
\end{tikzpicture}
\end{subfigure}
\begin{subfigure}{0.15\textwidth}
\begin{tikzpicture}[scale=1]
\tikzstyle{nod}= [circle, inner sep=0pt, fill=white, minimum size=2pt, draw]		
\node[nod] (a) at (1,0) {};
\node[nod] (b) at (2,0) {};
\node[nod] (c) at (3/2,1) {};
\draw[line width=1pt] (c) -- (a) --  (b);
\node[nod] (aa) at (1,2) {};
\node[nod] (bb) at (2,2) {};
\node[nod] (cc) at (3/2,3) {};
\draw[line width=1pt] (1.14,2.1) --  (1.86,2.1) -- (1.5,2.83) -- (1.14,2.1);
\draw[line width=1pt] (aa) --  (bb);
\node at (3/2,-1) {$k=2$};	
\end{tikzpicture}
\end{subfigure}
\begin{subfigure}{0.15\textwidth}
\begin{tikzpicture}[scale=1]
\tikzstyle{nod}= [circle, inner sep=0pt, fill=white, minimum size=2pt, draw]		
\node[nod] (a) at (1,0) {};
\node[nod] (b) at (2,0) {};
\node[nod] (c) at (3/2,1) {};
\draw[line width=1pt] (c) -- (a) --  (b) -- (c);
\node[nod] (aa) at (1,2) {};
\node[nod] (bb) at (2,2) {};
\node[nod] (cc) at (3/2,3) {};
\draw[line width=1pt] (1.14,2.1) --  (1.86,2.1) -- (1.5,2.83) -- (1.14,2.1);
\draw[line width=1pt] (cc) -- (aa) --  (bb);
\node at (3/2,-1) {$k=3$};	
\end{tikzpicture}
\end{subfigure}
\begin{subfigure}{0.15\textwidth}
\begin{tikzpicture}[scale=1]
\tikzstyle{nod}= [circle, inner sep=0pt, fill=white, minimum size=2pt, draw]		
\node[nod] (a) at (1,0) {};
\node[nod] (b) at (2,0) {};
\node[nod] (c) at (3/2,1) {};
\draw[line width=1pt] (c) -- (a) --  (b) -- (c);
\draw[line width=1pt] (1.14,0.1) --  (1.86,0.1) -- (1.5,0.83) -- (1.14,0.1);
\node at (3/2,-1) {$k=4$};	
\end{tikzpicture}
\end{subfigure}
\caption{\label{nih3} Nonisomorphic hypergraphs on 3 vertices. Edges represent 2-DPs, and triangles represent 3-DPs (if the hypergraph is admissible).}
\end{figure}
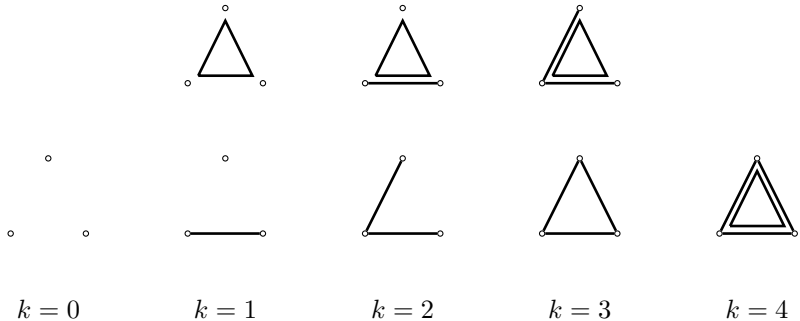

As we now know, cf. \cite{QTMK,trees}, for every tree there is a class of LV-systems, whose DPs/independent integrals correspond to the edges of the tree. The question arises whether for every hypergraph there is a corresponding general class of LV-systems. This is not the case, as is illustrated by the following examples. Hypergraphs with a corresponding general class of LV-systems will be called {\em LV-admissible}, or {\em admissible} for short. For admissible hypergraphs, we would be interested to know the number of free parameters. For now, and in Section 4, we will consider homogeneous LV-systems. However, the results hold more generally, at least in the nonhomogeneous case where $\b=b$ is constant, cf. \cite[Theorem 3.1]{HDP}. And we will get back to the nonhomogeneous case in Section 5.

\begin{example} \label{h34}
Consider the hypergraph on three vertices with $k=4$ hyperedges, see Fig. \ref{nih3}. Necessary and sufficient conditions on the entries of
\[
\A=\begin{pmatrix}
a_{1} & a_{2} & a_{3} 
\\
 a_{4} & a_{5} & a_{6} 
\\
 a_{7} & a_{8} & a_{9} 
\end{pmatrix}
\]
for the LV-system to admit the 3-DP $\alpha_1x_1+\alpha_2x_2+\alpha_3x_3$ are
\begin{equation} \label{eqsc}
\alpha_{1} \left(a_{5}-a_{2}\right) = 
\alpha_{2} \left(a_{4}-a_{1}\right),\quad
\alpha_{1} \left(a_{9}-a_{3}\right) = 
\alpha_{3} \left(a_{7}-a_{1}\right),\quad
\alpha_{2} \left(a_{9}-a_{6}\right) = 
\alpha_{3} \left(a_{8}-a_{5}\right).
\end{equation}
To obtain a general solution, we choose the tree $\{\{1,2\},\{1,3\}\}$, so that 
\begin{equation} \label{nn}
\left(a_{5}-a_{2}\right) \left(a_{1}-a_{4}\right) \left(a_{9}-a_{3}\right) \left(a_{1}-a_{7}\right)\neq 0,
\end{equation}
and we can solve the first two equations of \eqref{eqsc} for $
\alpha_{2} = \frac{\alpha_{1} \left(a_{2}-a_{5}\right)}{a_{1}-a_{4}}$ and $\alpha_{3} = \frac{\alpha_{1} \left(a_{3}-a_{9}\right)}{a_{1}-a_{7}}$.
Choosing $\alpha_1=(a_{1}-a_{4})(a_{1}-a_{7})$ the 3-DP is given by
\[
\left(a_{1}-a_{4}\right) \left(a_{1}-a_{7}\right) x_{1}+\left(a_{1}-a_{7}\right) \left(a_{2}-a_{5}\right) x_{2}+\left(a_{1}-a_{4}\right) \left(a_{3}-a_{9}\right) x_{3},
\]
and none of the coefficients vanish due to \eqref{nn}. The last equation of \eqref{eqsc} becomes 
\begin{equation} \label{C123}
\left(a_{2}-a_{5}\right) \left(a_{1}-a_{7}\right) \left(a_{6}-a_{9}\right)=\left(a_{3}-a_{9}\right) \left(a_{1}-a_{4}\right) \left(a_{5}-a_{8}\right),
\end{equation}
which is $C_{1,2,3}=0$, cf. \eqref{rc}.
For the class of LV-systems to admit three 2-DPs, necessary requirements are
\begin{equation} \label{32dps}
a_{8} = a_{2}, a_{7} = a_{4}, a_{6} = a_{3}.
\end{equation}
Substituting \eqref{32dps} into \eqref{C123} yields
\begin{equation} \label{nz}
2 \left(a_{2}-a_{5}\right) \left(a_{1}-a_{4}\right) \left(a_{3}-a_{9}\right)=0.
\end{equation}
which violates \eqref{nn}. Hence, since there is no general class of LV-systems which admits the hypergraph on three vertices with four hyperedges, this hypergraph is not admissible.

We need to mention here that there exists a special class of LV-systems which admits this hypergraph. Consider the homogeneous class of LV-systems with singular matrix
\[
\A=\begin{pmatrix}
a_1 & a_2 & a_3 \\
a_1 & a_2 & a_3 \\
a_1 & a_2 & a_3 \\
\end{pmatrix}.
\]
It admits the $k=4$ hypergraph since any linear combination $\alpha_1x_1+\alpha_2x_2+\alpha_3x_3$ is a DP. Apart from being a special case of the class with 1 3-DP, it is also a special case of the class with matrix
\[
\A=\begin{pmatrix}
a_1 & a_2 & a_3 \\
a_4 & a_5 & a_3 \\
a_4 & a_2 & a_9 \\
\end{pmatrix},
\]
which has $k=3$ 2-DPs,
$\left(a_{1}-a_{4}\right) x_{1}-\left(a_{5}-a_{2}\right) x_{2}$,
$\left(a_{1}-a_{4}\right) x_{1}-\left(a_{9}-a_{3}\right) x_{3}$ and $\left(a_{5}-a_{2}\right) x_{2}-\left(a_{9}-a_{3}\right) x_{3}$.
If one of the factors of \eqref{nz} vanishes, then two of the 2-DPs become 1-DPs. If two of the factors vanish, then there is also a 2-DP with arbitrary coefficients.
\end{example}
In the next two examples, the conditions can be satisfied but there are consequences.

\begin{example} \label{exam1}
Consider the hypergraph on four vertices with two hyperedges as in Fig. \ref{h423}(a).
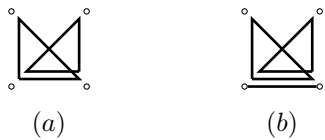
\begin{figure}[H]
\centering
\begin{subfigure}{0.2\textwidth}
\begin{tikzpicture}[scale=1]
\tikzstyle{nod}= [circle, inner sep=0pt, fill=white, minimum size=2pt, draw]		
\node[nod] (a) at (0,0) {};
\node[nod] (b) at (1,0) {};
\node[nod] (c) at (0,1) {};
\node[nod] (d) at (1,1) {};
\draw[line width=1pt] (0.1,0.1) --  (.1,.9) -- (.9,.1) -- (.1,.1);
\draw[line width=1pt] (.9,.2) -- (.2,.2) --  (.9,.9) -- (.9,.2);
\node at (1/2,-1/2) {$(a)$};	
\end{tikzpicture}
\end{subfigure}
\begin{subfigure}{0.2\textwidth}
\begin{tikzpicture}[scale=1]
\tikzstyle{nod}= [circle, inner sep=0pt, fill=white, minimum size=2pt, draw]		
\node[nod] (a) at (0,0) {};
\node[nod] (b) at (1,0) {};
\node[nod] (c) at (0,1) {};
\node[nod] (d) at (1,1) {};
\draw[line width=1pt] (0.1,0.1) --  (.1,.9) -- (.9,.1) -- (.1,.1);
\draw[line width=1pt] (.9,.2) -- (.2,.2) --  (.9,.9) -- (.9,.2);
\draw[line width=1pt] (a) -- (b);
\node at (1/2,-1/2) {$(b)$};	
\end{tikzpicture}
\end{subfigure}
\caption{\label{h423} Hypergraphs on 4 vertices with 2 resp. 3 hyperedges.}
\end{figure}
The LV-system with interaction matrix
\[
\A=\begin{pmatrix}
a_{1} & a_{2} & a_{3} & a_{4} 
\\
 a_{5} & a_{6} & a_{7} & a_{8} 
\\
 a_{9} & a_{10} & a_{11} & a_{12} 
\\
 a_{13} & a_{14} & a_{15} & a_{16} 
\end{pmatrix}
\]
has a $\{1,2,3\}$-DP if $C_{1,2,3}=0$ and $a_4=a_8=a_{12}$. It has a $\{1,2,4\}$-DP if
$C_{1,2,4}=0$ and $a_3=a_7=a_{15}$. So, as a consequence, we have $a_4=a_8$ and $a_3=a_7$,
which implies that the LV-system also admits a $\{1,2\}$-DP. Hence, the hypergraph in Fig. \ref{h423}(a) is not admissible. On the other hand, the hypergraph in Fig. \ref{h423}(b) is admissible (with
10 free parameters).
\end{example}
\begin{example} \label{exam2} Any admissible hypergraph with edges of degree 2 that form a loop contains the edges of degree 2 between all pairs of vertices of the loop, cf. \cite[Proposition 4.1]{KMMQ}.
\end{example}
Another issue we like to address is equivalence. LV-systems that are related to each other by a linear transformation, may correspond to different hypergraphs, which are not isomorphic. We will call such hypergraphs {\em LV-equivalent}, or {\em equivalent} for short.
\begin{example}
Consider the LV-system with interaction matrix
\[
\A=\begin{pmatrix}
1 & 1 & 1 
\\
 4 & 5 & 9 
\\
 2 & 1 & 3 
\end{pmatrix}.
\]
It admits the two DPs \eqref{2dps}, and these give rise to the following integrals
\[
\frac{\left(x_{1}+2 x_{3}\right)^3 x_{1}^{3}}{x_{2} x_{3}},\quad
\frac{\left(3 x_{1}+4 x_{2}+6 x_{3}\right) x_{3}^{3}}{x_{1}^{3} x_{2}}.
\]
By introducing the DPs
\[
y_{1} = x_{1},\quad
y_{2} = x_{1}+2 x_{3},\quad
y_{3} = 3 x_{1}+4 x_{2}+6 x_{3},
\]
as new variables, the equations for $\y$ take the form of an LV-system with interaction matrix
\[
\frac 14 \begin{pmatrix}
2 & -1 & 1 
\\
 -2 & 3 & 1 
\\
 -2 & -9 & 5 
\end{pmatrix}
\]
and, the $\y$-system admits the additional DPs $2x_3=y_2-y_1$ and $4x_2=y_3-3y_2$, i.e. it is a tree-system. Indeed, the two nonisomorphic hypergraphs in Fig. \ref{nih3} with $k=2$ are LV-equivalent (see section \ref{ss3}).
\end{example}

\section{Classification of $n$-component homogeneous LV-systems with linear DPs, for $n\leq5$} \label{s4}
For each $n\leq 5$, we consider the set of nonisomorphic hypergraphs (with $k\leq 7$ hyperedges if $n=5$) and determine which ones are admissible. Then, for each associated LV-system we consider the finite set of linear transformations where the new variables are Darboux polynomials, and hence yield LV-systems.
These systems may be associated to hypergraphs that are nonisomorphic but LV-equivalent. For $n<5$ we present the equivalence classes of admissible hypergraphs, for $n=5$ we provide nonequivalent ones.  
\subsection{2-component LV systems}
Not considering the empty graph, there is only one graph on two vertices: the tree \begin{tikzpicture}[scale=1]
\tikzstyle{nod}= [circle, inner sep=0pt, fill=white, minimum size=2pt, draw]		
\node[nod] (a) at (0,0) {};
\node[nod] (b) at (1,0) {};
\draw[line width=1pt] (a) -- (b);	
\end{tikzpicture}. The associated LV-system
\begin{equation} \label{xn2}
\begin{split}
\dot{x_{{1}}}&=x_{{1}} \left( a_{{1}}x_{{1}}+b_{{1}}x_{{2}} \right)\\
\dot{x_{{2}}}&=x_{{2}} \left( c_{{1}}x_{{1}}+a_{{2}}x_{{2}} \right).
\end{split}
\end{equation}
has 4 parameters, and admits the additional Darboux polynomial $P=(c_1 - a_1)x_1 + (a_2 - b_1)x_2$ and corresponding integral
$
x_{1}^{a_{4} \left(a_{3}-a_{1}\right)} x_{2}^{a_{1} \left(a_{2}-a_{4}\right)}P^{a_{1} a_{4}-a_{2} a_{3}} $.

\subsection{3-component LV-systems} \label{ss3}
There are 7 nonisomorphic hypergraphs of order 3 and positive size, cf. Fig. \ref{nih3}.
\begin{enumerate}[1.]
\item The two LV-systems related to the hypergraphs of size 1 are not LV-equivalent. Each of them has 8 free parameters. Their LV-matrices are respectively
\[
\begin{pmatrix}
a_{1} & a_{2} & a_{5} 
\\
a_{3} & a_{4} & a_{5} 
\\
a_{6} & a_{7} & a_{8} 
\end{pmatrix},\qquad
\begin{pmatrix}
a_{1} & a_4+\frac{\left(a_{2}-a_{8}\right) \left(a_{1}-a_{3}\right) \left(a_{4}-a_{7}\right)}{\left(a_{5}-a_{8}\right) \left(a_{1}-a_{6}\right)}  & a_{2} 
\\
 a_{3} & a_{4} & a_{5} 
\\
 a_{6} & a_{7} & a_{8} 
\end{pmatrix}.
\]
\item The bottom hypergraph of size 2 corresponds to the tree-system, with $3n-2=7$ parameters (here we have taken the parametrisation for tree-systems, as in \cite{trees}),
\begin{equation} \label{xn3}
\begin{split}
\dot{x_{{1}}}&=x_{{1}} \left( a_{{1}}x_{{1}}+b_{{1}}x_{{2}}+b_{{2}}x_{{3}}
 \right)\\
\dot{x_{{2}}}&=x_{{2}}\left(
c_{{1}}x_{{1}}+a_{{2}}x_{{2}}+b_{{2}}x_{{3}}
 \right)\\
\dot{x_{{3}}}&=x_{{3}} \left(
c_{{1}}x_{{1}}+c_{{2}}x_{{2}}+a_{{3}}x_{{3}}
\right).
\end{split}
\end{equation}
The edges are $e_1=(1,2)$, $e_2=(2,3)$, related to the additional DPs
\[
P_1=\left(c_{1}-a_{1}\right) x_{1}+\left(a_{2}-b_{1}\right) x_{2},\quad
P_2=\left(c_{2}-a_{2}\right) x_{2}+\left(a_{3}-b_{2}\right) x_{3}.
\]
Performing the linear transformation $\x\mapsto \y=(x_1,P_1,P_2)$ we obtain the LV system
\begin{subequations}\label{yn3}
\begin{equation}
\frac{dy_i}{dt} = y_i\left(\sum_{j=1}^{3}A_{ij}y_j\right),\quad i=1,\dots,3,
\end{equation}
where
\begin{equation}
\A=\left(\begin{array}{ccc}
\frac{a_{1} a_{2} a_{3}-a_{1} b_{2} c_{2}-a_{2} b_{2} c_{1}-a_{3} b_{1} c_{1}+b_{1} b_{2} c_{1}+b_{2} c_{1} c_{2}}{\left(a_{3}-b_{2}\right) \left(a_{2}-b_{1}\right)} & \frac{a_{2} b_{2}+a_{3} b_{1}-b_{1} b_{2}-b_{2} c_{2}}{\left(a_{3}-b_{2}\right) \left(a_{2}-b_{1}\right)} & \frac{b_{2}}{a_{3}-b_{2}} 
\\[2mm]
 \frac{2 a_{1} a_{2} a_{3}-a_{1} a_{2} b_{2}-a_{1} a_{3} b_{1}+a_{1} b_{1} b_{2}-a_{1} b_{2} c_{2}-a_{2} a_{3} c_{1}+b_{2} c_{1} c_{2}}{\left(a_{3}-b_{2}\right) \left(a_{2}-b_{1}\right)} & \frac{a_{2} a_{3}-b_{2} c_{2}}{\left(a_{3}-b_{2}\right) \left(a_{2}-b_{1}\right)} & \frac{b_{2}}{a_{3}-b_{2}} 
\\[2mm]
 \frac{2 a_{1} a_{2} a_{3}-a_{1} a_{2} b_{2}-a_{1} a_{3} c_{2}-a_{2} a_{3} c_{1}-a_{3} b_{1} c_{1}+a_{3} c_{1} c_{2}+b_{1} b_{2} c_{1}}{\left(a_{3}-b_{2}\right) \left(a_{2}-b_{1}\right)} & \frac{2 a_{2} a_{3}-a_{2} b_{2}-a_{3} c_{2}}{\left(a_{3}-b_{2}\right) \left(a_{2}-b_{1}\right)} & \frac{a_{3}}{a_{3}-b_{2}} 
\end{array}\right),
\end{equation}
\end{subequations}
which satisfies the condition $C_{1,2,3}=0$.
The LV system \eqref{yn3} admits the DPs
\[
 (a_{1}- c_{1})y_{1}+y_{2},\quad \left(a_{1}-c_{1}\right) \left(a_2-c_{2}\right)y_{1} + \left(a_{2}-c_{2}\right)y_{2} +\left(a_{2}-b_{1}\right) y_{3},
 \]
and corresponds to the other hypergraph of size 2 in Fig. \ref{nih3}. This hypergraph is therefore LV-equivalent to the tree on 3 vertices.
\item The two nonisomorphic hypergraphs of size $3$ are also LV-equivalent. Their LV-systems are subsystems of the $k=2$ systems, Eqs. \eqref{xn3} and \eqref{yn3}. In each case, the matrix $\A$ satisfies one extra condition and has 6 free parameters. We note that although the LV-system has three DPs, there are only two functionally independent integrals.
\item As we saw in Example \ref{h34}, the hypergraph with $k=4$ is not admissible.
\end{enumerate}
The number of hypergraphs of order 3 and size $k>0$, as well as how many of are admissible and nonequivalent is given in Table \ref{t0}. There are 4 nonequivalent 3-component LV-systems with additional DPs.

\begin{table}[H]
\begin{center}
\begin{tabular}{c|cccc|c}
$k$ & 1 & 2 & 3 & 4 & total \\
\hline
hypergraphs & 2 & 2 & 2 & 1 & 7 \\
admissible & 2 & 2 & 2 & 0 & 6\\
nonequivalent & 2 & 1 & 1 & 0 & 4\\
\end{tabular}
\vspace{3mm}

\parbox{11.5cm}{\caption{\label{t0} The number of order 3 hypergraphs, admissible hypergraphs, and equivalence classes of admissible hypergraphs, with $k>0$ hyperedges.}}
\end{center}
\end{table}

\subsection{4-component LV systems}
The number of hypergraphs of order 4 and size $k>0$, as well as how many of them are admissible and nonequivalent is given in Table \ref{t1}.

\begin{table}[H]
\begin{center}
\begin{tabular}{c|ccccccccccc|c}
$k$ & 1 & 2 & 3 & 4 & 5 & 6 & 7 & 8 & 9 & 10 & 11 & total \\
\hline
hypergraphs  & 3 & 7 & 16 & 28 & 35 & 35 & 28 & 16 & 7 & 3 & 1 & 179 \\
admissible  & 3 & 6 & 9 & 9 & 0 & 3 & 0 & 0 & 0 & 0 & 0 & 30\\
nonequivalent  & 3 & 3 & 3 & 1 & 0 & 1 & 0 & 0 & 0 & 0 & 0 & 11\\
\end{tabular}
\vspace{3mm}

\parbox{11.5cm}{\caption{\label{t1} The number of order 4 hypergraphs, admissible hypergraphs, and equivalence classes of admissible hypergraphs, with $k>0$ hyperedges.}}
\end{center}
\end{table}
\vspace{-12pt}
 
\begin{enumerate}[1.]
\item Nonisomorphic hypergraphs of size 1 are not LV-equivalent. For $n=4$, the size 1 hypergraphs are given in Fig. \ref{h4k1}.

\begin{figure}[H]
\centering
\begin{tikzpicture}[scale=1]
\tikzstyle{nod}= [circle, inner sep=0pt, fill=white, minimum size=2pt, draw]		
\node[nod] (a) at (0,0) {};
\node[nod] (b) at (1,0) {};
\node[nod] (c) at (0,1) {};
\node[nod] (d) at (1,1) {};
\draw[line width=1pt] (a) --  (c);
\def\r{3}	
\node[nod] (a) at (0+\r,0) {};
\node[nod] (b) at (1+\r,0) {};
\node[nod] (c) at (0+\r,1) {};
\node[nod] (d) at (1+\r,1) {};
\draw[line width=1pt] (0.1+\r,0.1) --  (.1+\r,.9) -- (.9+\r,.1) -- (.1+\r,.1);	
\def\r{6}
\node[nod] (a) at (0+\r,0) {};
\node[nod] (b) at (1+\r,0) {};
\node[nod] (c) at (0+\r,1) {};
\node[nod] (d) at (1+\r,1) {};
\draw[line width=1pt] (-0.1+\r,-0.1) --  (-.1+\r,1.1) -- (1.1+\r,1.1) -- (1.1+\r,-.1) -- (-0.1+\r,-0.1);	
\end{tikzpicture}
\caption{\label{h4k1} Nonequivalent admissible hypergraphs of order 4 and size 1.}
\end{figure}
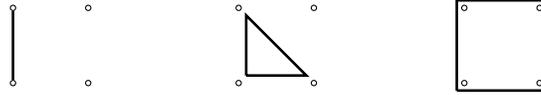

Their LV-matrices, with respectively 14,13 and 13 parameters, are given by
\[
\begin{pmatrix}
a_{1} & a_{2} & a_{5} & a_{6} 
\\
 a_{3} & a_{4} & a_{5} & a_{6} 
\\
 a_{7} & a_{8} & a_{9} & a_{10} 
\\
 a_{11} & a_{12} & a_{13} & a_{14} 
\end{pmatrix},\qquad
\begin{pmatrix}
a_{1} & a_{2} & a_{3} & a_{4} 
\\
 a_{5} & a_{6} & a_{7} & a_{4} 
\\
 a_{8} & a_{6}-\frac{\left(a_{2}-a_{6}\right) \left(a_{1}-a_{8}\right) \left(a_{7}-a_{9}\right)}{\left(a_{1}-a_{5}\right) \left(a_{3}-a_{9}\right)} & a_{9} & a_{4} 
\\
 a_{10} & a_{11} & a_{12} & a_{13} 
\end{pmatrix},
\]
and
\[
\begin{pmatrix}
a_{1} & a_{2} & a_{3} & a_{4} 
\\
 a_{5} & a_{6} & a_{9}+\frac{\left(a_{3}-a_{9}\right) \left(a_{1}-a_{5}\right) \left(a_{6}-a_{8}\right)}{\left(a_{2}-a_{6}\right) \left(a_{1}-a_{7}\right)} & a_{13}+\frac{\left(a_{4}-a_{13}\right) \left(a_{1}-a_{5}\right) \left(a_{6}-a_{12}\right)}{\left(a_{2}-a_{6}\right) \left(a_{1}-a_{11}\right)} 
\\
 a_{7} & a_{8} & a_{9} & a_{10} 
\\
 a_{11} & a_{12} & a_{9}-\frac{\left(a_{3}-a_{9}\right) \left(a_{1}-a_{11}\right) \left(a_{10}-a_{13}\right)}{\left(a_{1}-a_{7}\right) \left(a_{4}-a_{13}\right)} & a_{13} 
\end{pmatrix}.
\]
\item 
The size 2 LV-admissible hypergraphs are depicted in Fig. \ref{h4k2}. There are 3 equivalence classes which contain respectively 1,2 and 3 nonisomorphic hypergraphs.

\begin{figure}[H]
\centering
\begin{tikzpicture}[scale=1]
\tikzstyle{nod}= [circle, inner sep=0pt, fill=white, minimum size=2pt, draw]		
\def\r{6};
\node[nod] (a) at (0+\r,0) {};
\node[nod] (b) at (1+\r,0) {};
\node[nod] (c) at (0+\r,1) {};
\node[nod] (d) at (1+\r,1) {};
\draw[line width=1pt] (0.1+\r,0.1) --  (.1+\r,.9) -- (.9+\r,.1) -- (.1+\r,.1);
\draw[line width=1pt] (-0.1+\r,-0.1) --  (-.1+\r,1.1) -- (1.1+\r,1.1) -- (1.1+\r,-.1) -- (-0.1+\r,-0.1);
\def\r{3};
\node[nod] (a) at (0+\r,0) {};
\node[nod] (b) at (1+\r,0) {};
\node[nod] (c) at (0+\r,1) {};
\node[nod] (d) at (1+\r,1) {};
\draw[line width=1pt] (c) --  (d);
\draw[line width=1pt] (-0.1+\r,-0.1) --  (-.1+\r,1.1) -- (1.1+\r,1.1) -- (1.1+\r,-.1) -- (-0.1+\r,-0.1);
\def\r{0};
\node[nod] (a) at (0+\r,0) {};
\node[nod] (b) at (1+\r,0) {};
\node[nod] (c) at (0+\r,1) {};
\node[nod] (d) at (1+\r,1) {};
\draw[line width=1pt] (c) --  (d);
\draw[line width=1pt] (0.1+\r,0.1) --  (.1+\r,.9) -- (.9+\r,.1) -- (.1+\r,.1);
\def\r{3};
\def\u{2};
\node[nod] (a) at (0+\r,0+\u) {};
\node[nod] (b) at (1+\r,0+\u) {};
\node[nod] (c) at (0+\r,1+\u) {};
\node[nod] (d) at (1+\r,1+\u) {};
\draw[line width=1pt] (a) --  (b);
\draw[line width=1pt] (0.1+\r,0.1+\u) --  (.1+\r,.9+\u) -- (.9+\r,.1+\u) -- (.1+\r,.1+\u);
\def\r{0};
\def\u{2};
\node[nod] (a) at (0+\r,0+\u) {};
\node[nod] (b) at (1+\r,0+\u) {};
\node[nod] (c) at (0+\r,1+\u) {};
\node[nod] (d) at (1+\r,1+\u) {};
\draw[line width=1pt] (a) --  (b);
\draw[line width=1pt] (a) --  (c);
\def\r{0};
\def\u{4};
\node[nod] (a) at (0+\r,0+\u) {};
\node[nod] (b) at (1+\r,0+\u) {};
\node[nod] (c) at (0+\r,1+\u) {};
\node[nod] (d) at (1+\r,1+\u) {};
\draw[line width=1pt] (a) --  (c);
\draw[line width=1pt] (b) --  (d);
\end{tikzpicture}
\caption{\label{h4k2} Admissible hypergraphs of order 4 and size 2. Hypergraphs next to each other are LV-equivalent.}
\end{figure}
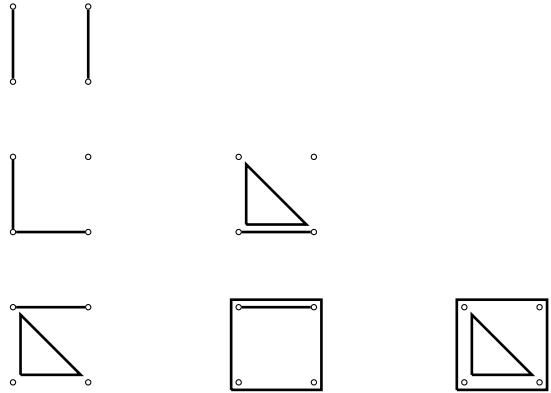

We take the hypergraphs whose edges are of lowest degree as representatives (these are the ones on the left in Fig. \ref{h4k2}). Their LV-matrices, with respectively 12, 12 and 11 parameters, are
\[
\begin{pmatrix}
a_{1} & a_{2} & a_{5} & a_{6} 
\\
 a_{3} & a_{4} & a_{5} & a_{6} 
\\
 a_{9} & a_{10} & a_{7} & a_{8} 
\\
 a_{9} & a_{10} & a_{11} & a_{12} 
\end{pmatrix},\qquad
\begin{pmatrix}
a_{1} & a_{6} & a_{4} & a_{8} 
\\
 a_{2} & a_{3} & a_{4} & a_{8} 
\\
 a_{5} & a_{6} & a_{7} & a_{8} 
\\
 a_{9} & a_{10} & a_{11} & a_{12} 
\end{pmatrix}
\]
and
\[
\begin{pmatrix}
a_{1} & a_{7} & a_{3} & a_{4} 
\\
 a_{5} & a_{6} & a_{3} & a_{4} 
\\
 a_{9} & a_{7} & a_{10} & a_{11}+\frac{\left(a_{4}-a_{11}\right) \left(a_{1}-a_{9}\right) \left(a_{10}-a_{8}\right)}{\left(a_{3}-a_{10}\right) \left(a_{1}-a_{2}\right)} 
\\
 a_{2} & a_{7} & a_{8} & a_{11} 
\end{pmatrix}.
\]
\item The size 3 LV-admissible hypergraphs are depicted in Fig. \ref{h4k3}. There are 3 equivalence classes which contain respectively 2,2 and 5 nonisomorphic hypergraphs.

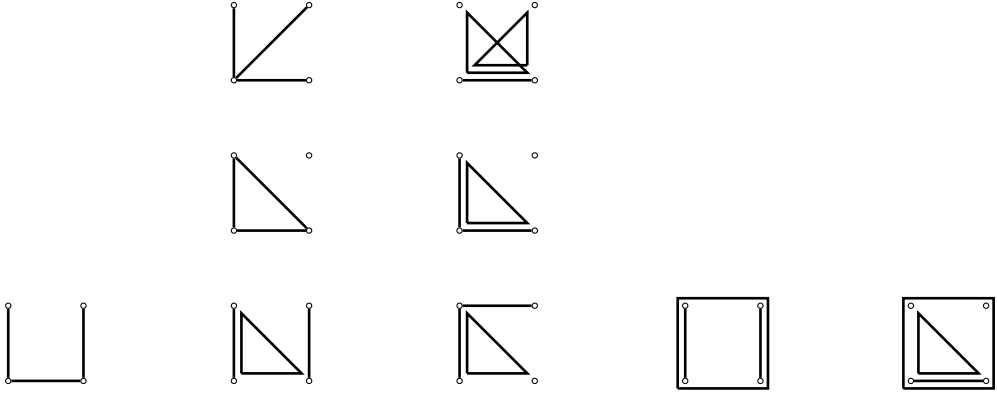
\begin{figure}[h]
\centering
\begin{tikzpicture}[scale=1]
\tikzstyle{nod}= [circle, inner sep=0pt, fill=white, minimum size=2pt, draw]		
\node[nod] (a) at (0,0) {};
\node[nod] (b) at (1,0) {};
\node[nod] (c) at (0,1) {};
\node[nod] (d) at (1,1) {};
\draw[line width=1pt] (c) --  (a) -- (b) -- (d);
\def\r{3};
\node[nod] (a) at (0+\r,0) {};
\node[nod] (b) at (1+\r,0) {};
\node[nod] (c) at (0+\r,1) {};
\node[nod] (d) at (1+\r,1) {};
\draw[line width=1pt] (c) --  (a);
\draw[line width=1pt] (b) --  (d);
\draw[line width=1pt] (0.1+\r,0.1) --  (.1+\r,.9) -- (.9+\r,.1) -- (.1+\r,.1);
\def\r{6};
\node[nod] (a) at (0+\r,0) {};
\node[nod] (b) at (1+\r,0) {};
\node[nod] (c) at (0+\r,1) {};
\node[nod] (d) at (1+\r,1) {};
\draw[line width=1pt] (c) --  (a);
\draw[line width=1pt] (c) --  (d);
\draw[line width=1pt] (0.1+\r,0.1) --  (.1+\r,.9) -- (.9+\r,.1) -- (.1+\r,.1);
\def\r{9};
\node[nod] (a) at (0+\r,0) {};
\node[nod] (b) at (1+\r,0) {};
\node[nod] (c) at (0+\r,1) {};
\node[nod] (d) at (1+\r,1) {};
\draw[line width=1pt] (c) --  (a);
\draw[line width=1pt] (b) --  (d);
\draw[line width=1pt] (-0.1+\r,-0.1) --  (-.1+\r,1.1) -- (1.1+\r,1.1) -- (1.1+\r,-.1) -- (-0.1+\r,-0.1);
\def\r{12};
\node[nod] (a) at (0+\r,0) {};
\node[nod] (b) at (1+\r,0) {};
\node[nod] (c) at (0+\r,1) {};
\node[nod] (d) at (1+\r,1) {};
\draw[line width=1pt] (b) --  (a);
\draw[line width=1pt] (-0.1+\r,-0.1) --  (-.1+\r,1.1) -- (1.1+\r,1.1) -- (1.1+\r,-.1) -- (-0.1+\r,-0.1);
\draw[line width=1pt] (0.1+\r,0.1) --  (.1+\r,.9) -- (.9+\r,.1) -- (.1+\r,.1);
\def\r{3};
\def\u{2};
\node[nod] (a) at (0+\r,0+\u) {};
\node[nod] (b) at (1+\r,0+\u) {};
\node[nod] (c) at (0+\r,1+\u) {};
\node[nod] (d) at (1+\r,1+\u) {};
\draw[line width=1pt] (a) --  (b) --  (c) -- (a);
\def\r{6};
\def\u{2};
\node[nod] (a) at (0+\r,0+\u) {};
\node[nod] (b) at (1+\r,0+\u) {};
\node[nod] (c) at (0+\r,1+\u) {};
\node[nod] (d) at (1+\r,1+\u) {};
\draw[line width=1pt] (c) -- (a) --  (b);
\draw[line width=1pt] (0.1+\r,0.1+\u) --  (.1+\r,.9+\u) -- (.9+\r,.1+\u) -- (.1+\r,.1+\u);
\def\r{3};
\def\u{4};
\node[nod] (a) at (0+\r,0+\u) {};
\node[nod] (b) at (1+\r,0+\u) {};
\node[nod] (c) at (0+\r,1+\u) {};
\node[nod] (d) at (1+\r,1+\u) {};
\draw[line width=1pt] (c) -- (a) --  (b);
\draw[line width=1pt] (a) --  (d);
\def\r{6};
\def\u{4};
\node[nod] (a) at (0+\r,0+\u) {};
\node[nod] (b) at (1+\r,0+\u) {};
\node[nod] (c) at (0+\r,1+\u) {};
\node[nod] (d) at (1+\r,1+\u) {};
\draw[line width=1pt] (a) --  (b);
\draw[line width=1pt] (0.1+\r,0.1+\u) --  (.1+\r,.9+\u) -- (.9+\r,.1+\u) -- (.1+\r,.1+\u);
\draw[line width=1pt] (.9+\r,.2+\u) -- (.2+\r,.2+\u) --  (.9+\r,.9+\u) -- (.9+\r,.2+\u);
\end{tikzpicture}
\caption{\label{h4k3} Admissible hypergraphs on 4 vertices with 3 hyperedges. Hypergraphs next to each other are LV-equivalent.}
\end{figure}

Representative LV-matrices, with respectively 10, 11 and 10 parameters, are
\begin{equation} \label{lvm43}
\begin{pmatrix}
a_{1} & a_{8} & a_{9} & a_{6} 
\\
 a_{2} & a_{3} & a_{9} & a_{6} 
\\
 a_{4} & a_{8} & a_{5} & a_{6} 
\\
 a_{7} & a_{8} & a_{9} & a_{10} 
\end{pmatrix},\qquad
\begin{pmatrix}
a_{1} & a_{5} & a_{3} & a_{7} 
\\
 a_{4} & a_{2} & a_{3} & a_{7} 
\\
 a_{4} & a_{5} & a_{6} & a_{7} 
\\
 a_{8} & a_{9} & a_{10} & a_{11} 
\end{pmatrix},\qquad
\begin{pmatrix}
a_{1} & a_{4} & a_{9} & a_{6} 
\\
 a_{7} & a_{2} & a_{9} & a_{6} 
\\
 a_{3} & a_{4} & a_{5} & a_{6} 
\\
 a_{7} & a_{8} & a_{9} & a_{10} 
\end{pmatrix}.
\end{equation}
For the second LV-system, there are only 2 functionally independent integrals. The other LV-systems are tree-systems. 
\item The size 4 LV-admissible hypergraphs are depicted in Fig. \ref{h4k4}. There is 1 equivalence class which contains 9 nonisomorphic hypergraphs.

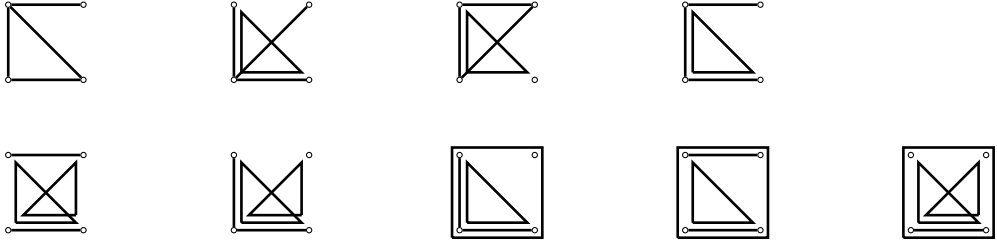
\begin{figure}[h]
\centering
\begin{tikzpicture}[scale=1]
\tikzstyle{nod}= [circle, inner sep=0pt, fill=white, minimum size=2pt, draw]		
\node[nod] (a) at (0,0) {};
\node[nod] (b) at (1,0) {};
\node[nod] (c) at (0,1) {};
\node[nod] (d) at (1,1) {};
\draw[line width=1pt] (c) -- (a) --  (b) -- (c) -- (d);
\def\r{3};
\node[nod] (a) at (0+\r,0) {};
\node[nod] (b) at (1+\r,0) {};
\node[nod] (c) at (0+\r,1) {};
\node[nod] (d) at (1+\r,1) {};
\draw[line width=1pt] (c) --  (a) -- (b);
\draw[line width=1pt] (a) --  (d);
\draw[line width=1pt] (0.1+\r,0.1) --  (.1+\r,.9) -- (.9+\r,.1) -- (.1+\r,.1);
\def\r{6};
\node[nod] (a) at (0+\r,0) {};
\node[nod] (b) at (1+\r,0) {};
\node[nod] (c) at (0+\r,1) {};
\node[nod] (d) at (1+\r,1) {};
\draw[line width=1pt] (a) -- (c) -- (d) -- (a);
\draw[line width=1pt] (0.1+\r,0.1) --  (.1+\r,.9) -- (.9+\r,.1) -- (.1+\r,.1);
\def\r{9};
\node[nod] (a) at (0+\r,0) {};
\node[nod] (b) at (1+\r,0) {};
\node[nod] (c) at (0+\r,1) {};
\node[nod] (d) at (1+\r,1) {};
\draw[line width=1pt] (b) -- (a) -- (c) -- (d);
\draw[line width=1pt] (0.1+\r,0.1) --  (.1+\r,.9) -- (.9+\r,.1) -- (.1+\r,.1);
\def\r{0};
\def\u{-2};
\node[nod] (a) at (0+\r,0+\u) {};
\node[nod] (b) at (1+\r,0+\u) {};
\node[nod] (c) at (0+\r,1+\u) {};
\node[nod] (d) at (1+\r,1+\u) {};
\draw[line width=1pt] (a) --  (b);
\draw[line width=1pt] (c) --  (d);
\draw[line width=1pt] (0.1+\r,0.1+\u) --  (.1+\r,.9+\u) -- (.9+\r,.1+\u) -- (.1+\r,.1+\u);
\draw[line width=1pt] (.9+\r,.2+\u) -- (.2+\r,.2+\u) --  (.9+\r,.9+\u) -- (.9+\r,.2+\u);
\def\r{3};
\def\u{-2};
\node[nod] (a) at (0+\r,0+\u) {};
\node[nod] (b) at (1+\r,0+\u) {};
\node[nod] (c) at (0+\r,1+\u) {};
\node[nod] (d) at (1+\r,1+\u) {};
\draw[line width=1pt] (c) --  (a) --  (b);
\draw[line width=1pt] (0.1+\r,0.1+\u) --  (.1+\r,.9+\u) -- (.9+\r,.1+\u) -- (.1+\r,.1+\u);
\draw[line width=1pt] (.9+\r,.2+\u) -- (.2+\r,.2+\u) --  (.9+\r,.9+\u) -- (.9+\r,.2+\u);
\def\r{6};
\def\u{-2};
\node[nod] (a) at (0+\r,0+\u) {};
\node[nod] (b) at (1+\r,0+\u) {};
\node[nod] (c) at (0+\r,1+\u) {};
\node[nod] (d) at (1+\r,1+\u) {};
\draw[line width=1pt] (b) --  (a) --  (c);
\draw[line width=1pt] (0.1+\r,0.1+\u) --  (.1+\r,.9+\u) -- (.9+\r,.1+\u) -- (.1+\r,.1+\u);
\draw[line width=1pt] (-0.1+\r,-0.1+\u) --  (-.1+\r,1.1+\u) -- (1.1+\r,1.1+\u) -- (1.1+\r,-.1+\u) -- (-0.1+\r,-0.1+\u);
\def\r{9};
\def\u{-2};
\node[nod] (a) at (0+\r,0+\u) {};
\node[nod] (b) at (1+\r,0+\u) {};
\node[nod] (c) at (0+\r,1+\u) {};
\node[nod] (d) at (1+\r,1+\u) {};
\draw[line width=1pt] (b) --  (a);
\draw[line width=1pt] (d) --  (c);
\draw[line width=1pt] (0.1+\r,0.1+\u) --  (.1+\r,.9+\u) -- (.9+\r,.1+\u) -- (.1+\r,.1+\u);
\draw[line width=1pt] (-0.1+\r,-0.1+\u) --  (-.1+\r,1.1+\u) -- (1.1+\r,1.1+\u) -- (1.1+\r,-.1+\u) -- (-0.1+\r,-0.1+\u);
\def\r{12};
\def\u{-2};
\node[nod] (a) at (0+\r,0+\u) {};
\node[nod] (b) at (1+\r,0+\u) {};
\node[nod] (c) at (0+\r,1+\u) {};
\node[nod] (d) at (1+\r,1+\u) {};
\draw[line width=1pt] (a) --  (b);
\draw[line width=1pt] (0.1+\r,0.1+\u) --  (.1+\r,.9+\u) -- (.9+\r,.1+\u) -- (.1+\r,.1+\u);
\draw[line width=1pt] (.9+\r,.2+\u) -- (.2+\r,.2+\u) --  (.9+\r,.9+\u) -- (.9+\r,.2+\u);
\draw[line width=1pt] (-0.1+\r,-0.1+\u) --  (-.1+\r,1.1+\u) -- (1.1+\r,1.1+\u) -- (1.1+\r,-.1+\u) -- (-0.1+\r,-0.1+\u);
\end{tikzpicture}
\caption{\label{h4k4} LV-equivalent admissible hypergraphs on 4 vertices with 4 hyperedges.}
\end{figure}

The simplest representative LV-matrix,
\[
\begin{pmatrix}
a_{1} & a_{7} & a_{8} & a_{5} 
\\
 a_{3} & a_{2} & a_{8} & a_{5} 
\\
 a_{3} & a_{7} & a_{4} & a_{5} 
\\
 a_{6} & a_{7} & a_{8} & a_{9} 

\end{pmatrix}
\]
has 9 parameters. It is (equivalent to) a subcase of each of the LV-systems with 3 DPs, cf. Eq. \eqref{lvm43}.
\item There are no size 5 LV-admissible hypergraphs of order 4.
\item There are 3 nonisomorphic LV-equivalent admissible hypergraphs of order 4 and size 6, cf. Fig. \ref{h4k6}.
     
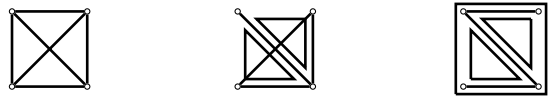
\begin{figure}[H]
\centering
\begin{tikzpicture}[scale=1]
\tikzstyle{nod}= [circle, inner sep=0pt, fill=white, minimum size=2pt, draw]		
\node[nod] (a) at (0,0) {};
\node[nod] (b) at (1,0) {};
\node[nod] (c) at (0,1) {};
\node[nod] (d) at (1,1) {};
\draw[line width=1pt] (a) --  (b) -- (c) -- (d) -- (a) -- (c);
\draw[line width=1pt] (b) -- (d);
\def\r{3};
\node[nod] (a) at (0+\r,0) {};
\node[nod] (b) at (1+\r,0) {};
\node[nod] (c) at (0+\r,1) {};
\node[nod] (d) at (1+\r,1) {};
\draw[line width=1pt] (a) -- (b) -- (d) -- (a);
\draw[line width=1pt] (b) --  (c);
\draw[line width=1pt] (0.1+\r,0.1) --  (.1+\r,.75) -- (.75+\r,.1) -- (.1+\r,.1);
\draw[line width=1pt] (0.9+\r,0.9) --  (.9+\r,.25) -- (.25+\r,.9) -- (.9+\r,.9);
\def\r{6};
\node[nod] (a) at (0+\r,0) {};
\node[nod] (b) at (1+\r,0) {};
\node[nod] (c) at (0+\r,1) {};
\node[nod] (d) at (1+\r,1) {};
\draw[line width=1pt] (a) -- (b) -- (c) -- (d);
\draw[line width=1pt] (0.1+\r,0.1) --  (.1+\r,.75) -- (.75+\r,.1) -- (.1+\r,.1);
\draw[line width=1pt] (0.9+\r,0.9) --  (.9+\r,.25) -- (.25+\r,.9) -- (.9+\r,.9);
\draw[line width=1pt] (-0.1+\r,-0.1) --  (-.1+\r,1.1) -- (1.1+\r,1.1) -- (1.1+\r,-.1) -- (-0.1+\r,-0.1);
\end{tikzpicture}
\caption{\label{h4k6} LV-equivalent admissible hypergraphs on 4 vertices with 6 hyperedges.}
\end{figure}
A representative LV-matrix, associated to the complete graph on 4 vertices, is
\[
\begin{pmatrix}
  a_{1} & a_{6} & a_{7} & a_{4} 
\\
 a_{5} & a_{2} & a_{7} & a_{4} 
\\
 a_{5} & a_{6} & a_{3} & a_{4} 
\\
 a_{5} & a_{6} & a_{7} & a_{8}  
\end{pmatrix},
\]
which has 8 parameters.
\end{enumerate}
\subsection{5-component LV systems}
The number of hypergraphs of order 5 and size $1\leq k\leq7$, as well as how many of them are admissible/nonequivalent is given in Table \ref{t2}.
\begin{table}[h]
\begin{center}
\begin{tabular}{c|cccccccccccc|c}
$k$ & 1 & 2 & 3 & 4 & 5 & 6 & 7 \\
\hline
hypergraphs & 4& 15& 62& 243& 841& 2544& 6672 \\
admissible & 4& 12& 27& 45& 60& 18& 21 \\
nonequivalent & 4& 6& 7& 7& 4& 2& 1 \\
\end{tabular}
\vspace{3mm}

\parbox{11.5cm}{\caption{\label{t2} The number of order 5 hypergraphs, admissible hypergraphs, and equivalence classes of admissible hypergraphs, with $1\leq k\leq 7$ hyperedges.}}
\end{center}
\end{table}
\vspace{-8pt} 

\begin{enumerate}[1.]
\item Nonisomorphic hypergraphs of size 1 are not LV-equivalent. For $n=5$, the size 1 hypergraphs are given in Fig. \ref{h5k1}.
\begin{figure}[H]
\centering
\begin{tikzpicture}[scale=.8]
\tikzstyle{nod}= [circle, inner sep=0pt, fill=white, minimum size=2pt, draw]		
\def\x{0}
\def\y{0}
\def\s{1}
\node[nod] (a) at (\s*1+\x,0+\y) {};
\node[nod] (b) at (\s*.31+\x,\s*.95+\y) {};
\node[nod] (c) at (-\s*.81+\x,\s*.59+\y) {};
\node[nod] (d) at (-\s*.81+\x,-\s*.59+\y) {};
\node[nod] (e) at (\s*.31+\x,-\s*.95+\y) {};
\draw[line width=1pt] (c) -- (d);
\def\x{3}
\def\y{0}
\def\s{1}
\node[nod] (a) at (\s*1+\x,0+\y) {};
\node[nod] (b) at (\s*.31+\x,\s*.95+\y) {};
\node[nod] (c) at (-\s*.81+\x,\s*.59+\y) {};
\node[nod] (d) at (-\s*.81+\x,-\s*.59+\y) {};
\node[nod] (e) at (\s*.31+\x,-\s*.95+\y) {};
\def\s{.9}
\coordinate (a) at (\s*1+\x,0+\y);
\coordinate (b) at (\s*.31+\x,\s*.95+\y);
\coordinate (c) at (-\s*.81+\x,\s*.59+\y);
\coordinate (d) at (-\s*.81+\x,-\s*.59+\y);
\coordinate (e) at (\s*.31+\x,-\s*.95+\y);
\draw[line width=.8pt] (a) -- (c) -- (d) -- (a);
\def\x{6}
\def\y{0}
\def\s{1}
\node[nod] (a) at (\s*1+\x,0+\y) {};
\node[nod] (b) at (\s*.31+\x,\s*.95+\y) {};
\node[nod] (c) at (-\s*.81+\x,\s*.59+\y) {};
\node[nod] (d) at (-\s*.81+\x,-\s*.59+\y) {};
\node[nod] (e) at (\s*.31+\x,-\s*.95+\y) {};
\def\s{.9}
\coordinate (a) at (\s*1+\x,0+\y);
\coordinate (b) at (\s*.31+\x,\s*.95+\y);
\coordinate (c) at (-\s*.81+\x,\s*.59+\y);
\coordinate (d) at (-\s*.81+\x,-\s*.59+\y);
\coordinate (e) at (\s*.31+\x,-\s*.95+\y);
\draw[line width=.8pt] (b) -- (c) -- (d) -- (e) -- (b);
\def\x{9}
\def\y{0}
\def\s{1}
\node[nod] (a) at (\s*1+\x,0+\y) {};
\node[nod] (b) at (\s*.31+\x,\s*.95+\y) {};
\node[nod] (c) at (-\s*.81+\x,\s*.59+\y) {};
\node[nod] (d) at (-\s*.81+\x,-\s*.59+\y) {};
\node[nod] (e) at (\s*.31+\x,-\s*.95+\y) {};
\def\s{.9}
\coordinate (a) at (\s*1+\x,0+\y);
\coordinate (b) at (\s*.31+\x,\s*.95+\y);
\coordinate (c) at (-\s*.81+\x,\s*.59+\y);
\coordinate (d) at (-\s*.81+\x,-\s*.59+\y);
\coordinate (e) at (\s*.31+\x,-\s*.95+\y);
\draw[line width=.8pt] (a) -- (b) -- (c) -- (d) -- (e) -- (a);
\end{tikzpicture}
\caption{\label{h5k1} Nonequivalent admissible hypergraphs on 5 vertices with 1 hyperedge.}
\end{figure}
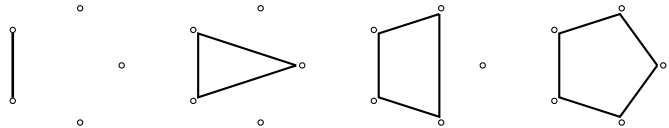

with respectively 22,20,19 and 19 parameters, are given by
\[
\begin{pmatrix}
a_{1} & a_{2} & a_{5} & a_{6} & a_{7} 
\\
 a_{3} & a_{4} & a_{5} & a_{6} & a_{7} 
\\
 a_{8} & a_{9} & a_{10} & a_{11} & a_{12} 
\\
 a_{13} & a_{14} & a_{15} & a_{16} & a_{17} 
\\
 a_{18} & a_{19} & a_{20} & a_{21} & a_{22}
\end{pmatrix},\qquad
\begin{pmatrix}
a_{1} & a_{2} & a_{3} & a_{9} & a_{10} 
\\
 a_{4} & a_{5} & a_{6} & a_{9} & a_{10} 
\\
 a_{1}-\frac{\left(a_{5}-a_{7}\right) \left(a_{3}-a_{8}\right) \left(a_{1}-a_{4}\right)}{\left(a_{2}-a_{5}\right) \left(a_{6}-a_{8}\right)} & a_{7} & a_{8} & a_{9} & a_{10} 
\\
 a_{11} & a_{12} & a_{13} & a_{14} & a_{15} 
\\
 a_{16} & a_{17} & a_{18} & a_{19} & a_{20} 
\end{pmatrix},
\]
\[
\begin{pmatrix}
a_{1} & a_{2} & a_{3} & a_{4} & a_{14} 
\\
 a_{5} & a_{6} & a_{9}+\frac{\left(a_{6}-a_{8}\right) \left(a_{3}-a_{9}\right) \left(a_{1}-a_{5}\right)}{\left(a_{2}-a_{6}\right) \left(a_{1}-a_{7}\right)} & a_{13}+\frac{\left(a_{1}-a_{5}\right) \left(a_{4}-a_{13}\right) \left(a_{6}-a_{11}\right)}{\left(a_{2}-a_{6}\right) \left(a_{1}-a_{10}\right)} & a_{14} 
\\
 a_{7} & a_{8} & a_{9} & a_{13}+\frac{\left(a_{1}-a_{7}\right) \left(a_{4}-a_{13}\right) \left(a_{9}-a_{12}\right)}{\left(a_{1}-a_{10}\right) \left(a_{3}-a_{9}\right)} & a_{14} 
\\
 a_{10} & a_{11} & a_{12} & a_{13} & a_{14} 
\\
 a_{15} & a_{16} & a_{17} & a_{18} & a_{19} 
\end{pmatrix},
\]
\[
\begin{pmatrix}
a_{1} & a_{2} & a_{3} & a_{4} & a_{5} 
\\
 a_{6} & a_{7} & a_{10}+\frac{\left(a_{7}-a_{9}\right) \left(a_{1}-a_{6}\right) \left(a_{3}-a_{10}\right)}{\left(a_{2}-a_{7}\right) \left(a_{1}-a_{8}\right)} & a_{14}+\frac{\left(a_{1}-a_{6}\right) \left(a_{4}-a_{14}\right) \left(a_{7}-a_{12}\right)}{\left(a_{2}-a_{7}\right) \left(-a_{11}+a_{1}\right)} & a_{19}+\frac{\left(a_{1}-a_{6}\right) \left(a_{5}-a_{19}\right) \left(a_{7}-a_{16}\right)}{\left(a_{2}-a_{7}\right) \left(a_{1}-a_{15}\right)} 
\\
 a_{8} & a_{9} & a_{10} & a_{14}+\frac{\left(a_{1}-a_{8}\right) \left(a_{4}-a_{14}\right) \left(a_{10}-a_{13}\right)}{\left(a_{1}-a_{11}\right) \left(a_{3}-a_{10}\right)} & a_{19}+\frac{\left(a_{1}-a_{8}\right) \left(a_{5}-a_{19}\right) \left(a_{10}-a_{17}\right)}{\left(a_{1}-a_{15}\right) \left(a_{3}-a_{10}\right)} 
\\
 a_{11} & a_{12} & a_{13} & a_{14} & a_{19}+\frac{\left(a_{5}-a_{19}\right) \left(a_{14}-a_{18}\right) \left(a_{1}-a_{11}\right)}{\left(a_{1}-a_{15}\right) \left(a_{4}-a_{14}\right)} 
\\
 a_{15} & a_{16} & a_{17} & a_{18} & a_{19} 
\end{pmatrix}.
\]

\item Representatives of the size 2 admissible hypergraph LV-equivalence classes are depicted in Fig. \ref{h5k2}.
\begin{figure}[H]
\centering
\begin{tikzpicture}[scale=.8]
\tikzstyle{nod}= [circle, inner sep=0pt, fill=white, minimum size=2pt, draw]		
\def\x{0}
\def\y{0}
\def\s{1}
\node[nod] (a) at (\s*1+\x,0+\y) {};
\node[nod] (b) at (\s*.31+\x,\s*.95+\y) {};
\node[nod] (c) at (-\s*.81+\x,\s*.59+\y) {};
\node[nod] (d) at (-\s*.81+\x,-\s*.59+\y) {};
\node[nod] (e) at (\s*.31+\x,-\s*.95+\y) {};
\draw[line width=1pt] (b) -- (c) -- (d);
\node (o) at (\x,0) {2};
\def\x{3}
\def\y{0}
\def\s{1}
\node[nod] (a) at (\s*1+\x,0+\y) {};
\node[nod] (b) at (\s*.31+\x,\s*.95+\y) {};
\node[nod] (c) at (-\s*.81+\x,\s*.59+\y) {};
\node[nod] (d) at (-\s*.81+\x,-\s*.59+\y) {};
\node[nod] (e) at (\s*.31+\x,-\s*.95+\y) {};
\draw[line width=1pt] (b) -- (a);
\draw[line width=1pt] (c) -- (d);
\node (o) at (\x,0) {1};
\def\x{6}
\def\y{0}
\def\s{1}
\node[nod] (a) at (\s*1+\x,0+\y) {};
\node[nod] (b) at (\s*.31+\x,\s*.95+\y) {};
\node[nod] (c) at (-\s*.81+\x,\s*.59+\y) {};
\node[nod] (d) at (-\s*.81+\x,-\s*.59+\y) {};
\node[nod] (e) at (\s*.31+\x,-\s*.95+\y) {};
\draw[line width=1pt] (c) -- (d);
\def\s{.9}
\coordinate (a) at (\s*1+\x,0+\y);
\coordinate (b) at (\s*.31+\x,\s*.95+\y);
\coordinate (c) at (-\s*.81+\x,\s*.59+\y);
\coordinate (d) at (-\s*.81+\x,-\s*.59+\y);
\coordinate (e) at (\s*.31+\x,-\s*.95+\y);
\draw[line width=.8pt] (a) -- (d) -- (e) -- (a);
\node (o) at (\x,.2) {3};
\def\x{9}
\def\y{0}
\def\s{1}
\node[nod] (a) at (\s*1+\x,0+\y) {};
\node[nod] (b) at (\s*.31+\x,\s*.95+\y) {};
\node[nod] (c) at (-\s*.81+\x,\s*.59+\y) {};
\node[nod] (d) at (-\s*.81+\x,-\s*.59+\y) {};
\node[nod] (e) at (\s*.31+\x,-\s*.95+\y) {};
\draw[line width=1pt] (c) -- (d);
\def\s{.9}
\coordinate (a) at (\s*1+\x,0+\y);
\coordinate (b) at (\s*.31+\x,\s*.95+\y);
\coordinate (c) at (-\s*.81+\x,\s*.59+\y);
\coordinate (d) at (-\s*.81+\x,-\s*.59+\y);
\coordinate (e) at (\s*.31+\x,-\s*.95+\y);
\draw[line width=.8pt] (a) -- (b) -- (e) -- (a);
\node (o) at (\x-.2,0) {1};
\def\x{12}
\def\y{0}
\def\s{1}
\node[nod] (a) at (\s*1+\x,0+\y) {};
\node[nod] (b) at (\s*.31+\x,\s*.95+\y) {};
\node[nod] (c) at (-\s*.81+\x,\s*.59+\y) {};
\node[nod] (d) at (-\s*.81+\x,-\s*.59+\y) {};
\node[nod] (e) at (\s*.31+\x,-\s*.95+\y) {};
\draw[line width=1pt] (c) -- (d);
\def\s{.9}
\coordinate (a) at (\s*1+\x,0+\y);
\coordinate (b) at (\s*.31+\x,\s*.95+\y);
\coordinate (c) at (-\s*.81+\x,\s*.59+\y);
\coordinate (d) at (-\s*.81+\x,-\s*.59+\y);
\coordinate (e) at (\s*.31+\x,-\s*.95+\y);
\draw[line width=.8pt] (a) -- (b) -- (d) -- (e) -- (a);
\node (o) at (\x+.1,-.1) {3};
\def\x{15}
\def\y{0}
\def\s{1}
\node[nod] (a) at (\s*1+\x,0+\y) {};
\node[nod] (b) at (\s*.31+\x,\s*.95+\y) {};
\node[nod] (c) at (-\s*.81+\x,\s*.59+\y) {};
\node[nod] (d) at (-\s*.81+\x,-\s*.59+\y) {};
\node[nod] (e) at (\s*.31+\x,-\s*.95+\y) {};
\def\s{.9}
\coordinate (a) at (\s*1+\x,0+\y);
\coordinate (b) at (\s*.31+\x,\s*.95+\y);
\coordinate (c) at (-\s*.81+\x,\s*.59+\y);
\coordinate (d) at (-\s*.81+\x,-\s*.59+\y);
\coordinate (e) at (\s*.31+\x,-\s*.95+\y);
\draw[line width=.8pt] (a) -- (d) -- (e) -- (a);
\draw[line width=.8pt] (b) -- (d) -- (c) -- (b);
\node (o) at (\x+.1,+.1) {2};
\end{tikzpicture}
\caption{\label{h5k2} Nonequivalent admissible hypergraphs on 5 vertices with 2 hyperedges. We have added the number of nonisomorphic hypergraphs that are LV-equivalent.}
\end{figure}
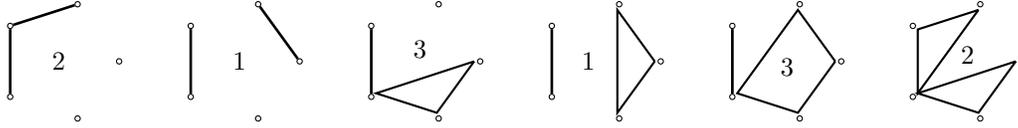
Corresponding LV-matrices, where the number of parameters is the index of the parameter in the bottom right corner, are
\[
\begin{pmatrix}
  a_{1} & a_{6} & a_{4} & a_{8} & a_{9} 
\\
 a_{2} & a_{3} & a_{4} & a_{8} & a_{9} 
\\
 a_{5} & a_{6} & a_{7} & a_{8} & a_{9} 
\\
 a_{10} & a_{11} & a_{12} & a_{13} & a_{14} 
\\
 a_{15} & a_{16} & a_{17} & a_{18} & a_{19}  
\end{pmatrix},\qquad 
\begin{pmatrix}
  a_{1} & a_{2} & a_{5} & a_{6} & a_{7} 
\\
 a_{3} & a_{4} & a_{5} & a_{6} & a_{7} 
\\
 a_{10} & a_{11} & a_{8} & a_{9} & a_{14} 
\\
 a_{10} & a_{11} & a_{12} & a_{13} & a_{14} 
\\
 a_{15} & a_{16} & a_{17} & a_{18} & a_{19} 
\end{pmatrix},
\]
\[
\begin{pmatrix}
  a_{1} & a_{9} & a_{2} & a_{3} & a_{12} 
\\
 a_{4} & a_{5} & a_{2} & a_{3} & a_{12} 
\\
 a_{1}-\frac{\left(a_{2}-a_{6}\right) \left(a_{1}-a_{8}\right) \left(a_{7}-a_{11}\right)}{\left(a_{3}-a_{11}\right) \left(a_{6}-a_{10}\right)} & a_{9} & a_{6} & a_{7} & a_{12} 
\\
 a_{8} & a_{9} & a_{10} & a_{11} & a_{12} 
\\
 a_{13} & a_{14} & a_{15} & a_{16} & a_{17} 
\end{pmatrix},
\]
\[
\begin{pmatrix}
a_{1} & a_{2} & a_{5} & a_{6} & a_{7} 
\\
 a_{3} & a_{4} & a_{5} & a_{6} & a_{7} 
\\
 a_{13} & a_{14} & a_{8} & a_{11}+\frac{\left(a_{9}-a_{17}\right) \left(a_{11}-a_{16}\right) \left(a_{8}-a_{10}\right)}{\left(a_{12}-a_{17}\right) \left(a_{8}-a_{15}\right)} & a_{9} 
\\
 a_{13} & a_{14} & a_{10} & a_{11} & a_{12} 
\\
 a_{13} & a_{14} & a_{15} & a_{16} & a_{17}    
\end{pmatrix},
\]
\[
\begin{pmatrix}
a_{1} & a_{13} & a_{2} & a_{3} & a_{4} 
\\
 a_{5} & a_{6} & a_{2} & a_{3} & a_{4} 
\\
 a_{7} & a_{13} & a_{8} & a_{11}+\frac{\left(a_{1}-a_{7}\right) \left(a_{3}-a_{11}\right) \left(a_{8}-a_{10}\right)}{\left(a_{1}-a_{9}\right) \left(a_{2}-a_{8}\right)} & a_{16}+\frac{\left(a_{1}-a_{7}\right) \left(a_{4}-a_{16}\right) \left(a_{8}-a_{14}\right)}{\left(a_{1}-a_{12}\right) \left(a_{2}-a_{8}\right)} 
\\
 a_{9} & a_{13} & a_{10} & a_{11} & a_{16}+\frac{\left(a_{1}-a_{9}\right) \left(a_{4}-a_{16}\right) \left(a_{11}-a_{15}\right)}{\left(a_{1}-a_{12}\right) \left(a_{3}-a_{11}\right)} 
\\
 a_{12} & a_{13} & a_{14} & a_{15} & a_{16}    
\end{pmatrix},
\]
\[
\begin{pmatrix}
 a_{1} & a_{2} & a_{10}+\frac{\left(a_{2}-a_{6}\right) \left(a_{1}-a_{8}\right) \left(a_{7}-a_{10}\right)}{\left(a_{6}-a_{9}\right) \left(a_{1}-a_{5}\right)} & a_{3} & a_{4} 
\\
 a_{5} & a_{6} & a_{7} & a_{3} & a_{4} 
\\
 a_{8} & a_{9} & a_{10} & a_{3} & a_{4} 
\\
 a_{11} & a_{2} & a_{10}+\frac{\left(a_{2}-a_{6}\right) \left(a_{1}-a_{8}\right) \left(a_{7}-a_{10}\right)}{\left(a_{6}-a_{9}\right) \left(a_{1}-a_{5}\right)} & a_{12} & a_{15}+\frac{\left(a_{4}-a_{15}\right) \left(a_{1}-a_{11}\right) \left(a_{12}-a_{14}\right)}{\left(a_{1}-a_{13}\right) \left(a_{3}-a_{12}\right)} 
\\
 a_{13} & a_{2} & a_{10}+\frac{\left(a_{2}-a_{6}\right) \left(a_{1}-a_{8}\right) \left(a_{7}-a_{10}\right)}{\left(a_{6}-a_{9}\right) \left(a_{1}-a_{5}\right)} & a_{14} & a_{15}   
\end{pmatrix}.
\]
\item Representatives of the size 3 admissible hypergraph LV-equivalence classes are depicted in Fig. \ref{h5k3}.
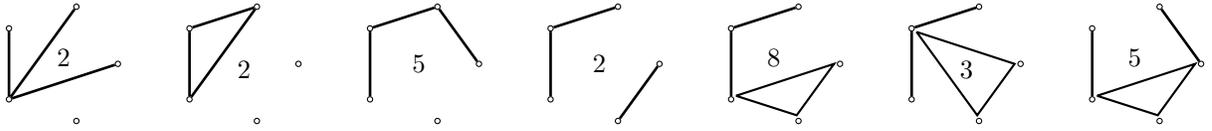
\begin{figure}[H]
\centering
\begin{tikzpicture}[scale=.8]
\tikzstyle{nod}= [circle, inner sep=0pt, fill=white, minimum size=2pt, draw]		
\def\x{0}
\def\y{3}
\def\s{1}
\node[nod] (a) at (\s*1+\x,0+\y) {};
\node[nod] (b) at (\s*.31+\x,\s*.95+\y) {};
\node[nod] (c) at (-\s*.81+\x,\s*.59+\y) {};
\node[nod] (d) at (-\s*.81+\x,-\s*.59+\y) {};
\node[nod] (e) at (\s*.31+\x,-\s*.95+\y) {};
\draw[line width=1pt] (c) -- (d) -- (b);
\draw[line width=1pt] (d) -- (a);
\node (o) at (\x+.1,\y+.1) {2};
\def\x{3}
\def\y{3}
\def\s{1}
\node[nod] (a) at (\s*1+\x,0+\y) {};
\node[nod] (b) at (\s*.31+\x,\s*.95+\y) {};
\node[nod] (c) at (-\s*.81+\x,\s*.59+\y) {};
\node[nod] (d) at (-\s*.81+\x,-\s*.59+\y) {};
\node[nod] (e) at (\s*.31+\x,-\s*.95+\y) {};
\draw[line width=1pt] (b) -- (c) -- (d) -- (b);
\node (o) at (\x+.1,\y-.1) {2};
\def\x{6}
\def\y{3}
\def\s{1}
\node[nod] (a) at (\s*1+\x,0+\y) {};
\node[nod] (b) at (\s*.31+\x,\s*.95+\y) {};
\node[nod] (c) at (-\s*.81+\x,\s*.59+\y) {};
\node[nod] (d) at (-\s*.81+\x,-\s*.59+\y) {};
\node[nod] (e) at (\s*.31+\x,-\s*.95+\y) {};
\draw[line width=1pt] (a) -- (b) -- (c) -- (d);
\node (o) at (\x,\y) {5};
\def\x{0}
\def\y{0}
\def\s{1}
\node[nod] (a) at (\s*1+\x,0+\y) {};
\node[nod] (b) at (\s*.31+\x,\s*.95+\y) {};
\node[nod] (c) at (-\s*.81+\x,\s*.59+\y) {};
\node[nod] (d) at (-\s*.81+\x,-\s*.59+\y) {};
\node[nod] (e) at (\s*.31+\x,-\s*.95+\y) {};
\draw[line width=1pt] (d) -- (c) -- (b);
\draw[line width=1pt] (a) -- (e);
\node (o) at (\x,0) {2};
\def\x{3}
\def\y{0}
\def\s{1}
\node[nod] (a) at (\s*1+\x,0+\y) {};
\node[nod] (b) at (\s*.31+\x,\s*.95+\y) {};
\node[nod] (c) at (-\s*.81+\x,\s*.59+\y) {};
\node[nod] (d) at (-\s*.81+\x,-\s*.59+\y) {};
\node[nod] (e) at (\s*.31+\x,-\s*.95+\y) {};
\draw[line width=1pt] (b) -- (c) -- (d);
\def\s{.9}
\coordinate (a) at (\s*1+\x,0+\y);
\coordinate (b) at (\s*.31+\x,\s*.95+\y);
\coordinate (c) at (-\s*.81+\x,\s*.59+\y);
\coordinate (d) at (-\s*.81+\x,-\s*.59+\y);
\coordinate (e) at (\s*.31+\x,-\s*.95+\y);
\draw[line width=.8pt] (d) -- (e) -- (a) -- (d);
\node (o) at (\x-.1,.1) {8};
\def\x{6}
\def\y{0}
\def\s{1}
\node[nod] (a) at (\s*1+\x,0+\y) {};
\node[nod] (b) at (\s*.31+\x,\s*.95+\y) {};
\node[nod] (c) at (-\s*.81+\x,\s*.59+\y) {};
\node[nod] (d) at (-\s*.81+\x,-\s*.59+\y) {};
\node[nod] (e) at (\s*.31+\x,-\s*.95+\y) {};
\draw[line width=1pt] (d) -- (c) -- (b);
\def\s{.9}
\coordinate (a) at (\s*1+\x,0+\y);
\coordinate (b) at (\s*.31+\x,\s*.95+\y);
\coordinate (c) at (-\s*.81+\x,\s*.59+\y);
\coordinate (d) at (-\s*.81+\x,-\s*.59+\y);
\coordinate (e) at (\s*.31+\x,-\s*.95+\y);
\draw[line width=.8pt] (c) -- (e) -- (a) -- (c);
\node (o) at (\x+.1,-.1) {3};
\def\x{9}
\def\y{0}
\def\s{1}
\node[nod] (a) at (\s*1+\x,0+\y) {};
\node[nod] (b) at (\s*.31+\x,\s*.95+\y) {};
\node[nod] (c) at (-\s*.81+\x,\s*.59+\y) {};
\node[nod] (d) at (-\s*.81+\x,-\s*.59+\y) {};
\node[nod] (e) at (\s*.31+\x,-\s*.95+\y) {};
\draw[line width=1pt] (d) -- (c);
\draw[line width=1pt] (a) -- (b);
\def\s{.9}
\coordinate (a) at (\s*1+\x,0+\y);
\coordinate (b) at (\s*.31+\x,\s*.95+\y);
\coordinate (c) at (-\s*.81+\x,\s*.59+\y);
\coordinate (d) at (-\s*.81+\x,-\s*.59+\y);
\coordinate (e) at (\s*.31+\x,-\s*.95+\y);
\draw[line width=.8pt] (d) -- (e) -- (a) -- (d);
\node (o) at (\x-.1,.1) {5};
\end{tikzpicture}
\caption{\label{h5k3} Nonequivalent admissible hypergraphs on 5 vertices with 3 hyperedges.}
\end{figure}

Associated LV-matrices are:
\[
\begin{pmatrix}
a_{1} & a_{8} & a_{9} & a_{6} & a_{11} 
\\
 a_{2} & a_{3} & a_{9} & a_{6} & a_{11} 
\\
 a_{4} & a_{8} & a_{5} & a_{6} & a_{11} 
\\
 a_{7} & a_{8} & a_{9} & a_{10} & a_{11} 
\\
 a_{12} & a_{13} & a_{14} & a_{15} & a_{16}   
\end{pmatrix},\qquad 
\begin{pmatrix}
 a_{1} & a_{5} & a_{3} & a_{7} & a_{8} 
\\
 a_{4} & a_{2} & a_{3} & a_{7} & a_{8} 
\\
 a_{4} & a_{5} & a_{6} & a_{7} & a_{8} 
\\
 a_{9} & a_{10} & a_{11} & a_{12} & a_{13} 
\\
 a_{14} & a_{15} & a_{16} & a_{17} & a_{18}   
\end{pmatrix}
\]
(which only has two independent integrals),
\[
\begin{pmatrix}
a_{1} & a_{4} & a_{9} & a_{6} & a_{11} 
\\
 a_{7} & a_{2} & a_{9} & a_{6} & a_{11} 
\\
 a_{3} & a_{4} & a_{5} & a_{6} & a_{11} 
\\
 a_{7} & a_{8} & a_{9} & a_{10} & a_{11} 
\\
 a_{12} & a_{13} & a_{14} & a_{15} & a_{16}    
\end{pmatrix},\qquad
\begin{pmatrix}
  a_{1} & a_{6} & a_{4} & a_{8} & a_{9} 
\\
 a_{2} & a_{3} & a_{4} & a_{8} & a_{9} 
\\
 a_{5} & a_{6} & a_{7} & a_{8} & a_{9} 
\\
 a_{12} & a_{13} & a_{14} & a_{10} & a_{11} 
\\
 a_{12} & a_{13} & a_{14} & a_{15} & a_{16} 
\end{pmatrix},
\]
\[
\begin{pmatrix}
  a_{1} & a_{5} & a_{12} & a_{3} & a_{7} 
\\
 a_{10} & a_{2} & a_{12} & a_{3} & a_{7} 
\\
 a_{4} & a_{5} & a_{6} & a_{3} & a_{7} 
\\
 a_{10} & a_{2}-\frac{\left(a_{3}-a_{8}\right) \left(a_{9}-a_{14}\right) \left(-a_{11}+a_{2}\right)}{\left(a_{7}-a_{14}\right) \left(a_{8}-a_{13}\right)} & a_{12} & a_{8} & a_{9} 
\\
 a_{10} & a_{11} & a_{12} & a_{13} & a_{14}  
\end{pmatrix},
\]
\[
\begin{pmatrix}
  a_{1} & a_{11} & a_{12} & a_{2} & a_{3} 
\\
 a_{4} & a_{5} & a_{12} & a_{2} & a_{3} 
\\
 a_{6} & a_{11} & a_{7} & a_{2} & a_{3} 
\\
 a_{1}-\frac{\left(a_{2}-a_{8}\right) \left(a_{9}-a_{14}\right) \left(-a_{10}+a_{1}\right)}{\left(a_{3}-a_{14}\right) \left(a_{8}-a_{13}\right)} & a_{11} & a_{12} & a_{8} & a_{9} 
\\
 a_{10} & a_{11} & a_{12} & a_{13} & a_{14}  
\end{pmatrix},
\]
\[
\begin{pmatrix}
  a_{1} & a_{11} & a_{2} & a_{13} & a_{3} 
\\
 a_{4} & a_{5} & a_{2} & a_{13} & a_{3} 
\\
 a_{10} & a_{11} & a_{6} & a_{13} & a_{7} 
\\
 a_{10} & a_{11} & a_{8} & a_{9} & a_{7} 
\\
 a_{1}-\frac{\left(a_{3}-a_{14}\right) \left(-a_{12}+a_{6}\right) \left(-a_{10}+a_{1}\right)}{\left(-a_{6}+a_{2}\right) \left(a_{7}-a_{14}\right)} & a_{11} & a_{12} & a_{13} & a_{14}  
\end{pmatrix}.
\]
\item Representatives of the size 4 admissible hypergraph LV-equivalence classes are depicted in Fig. \ref{h5k4}.
\begin{figure}[H]
\centering
\begin{tikzpicture}[scale=.8]
\tikzstyle{nod}= [circle, inner sep=0pt, fill=white, minimum size=2pt, draw]		
\def\x{0}
\def\y{3}
\def\s{1}
\node[nod] (a) at (\s*1+\x,0+\y) {};
\node[nod] (b) at (\s*.31+\x,\s*.95+\y) {};
\node[nod] (c) at (-\s*.81+\x,\s*.59+\y) {};
\node[nod] (d) at (-\s*.81+\x,-\s*.59+\y) {};
\node[nod] (e) at (\s*.31+\x,-\s*.95+\y) {};
\draw[line width=1pt] (c) -- (d) -- (b);
\draw[line width=1pt] (e) -- (d) -- (a);
\node (o) at (\x+.1,\y.1) {2};
\def\x{3}
\def\y{3}
\def\s{1}
\node[nod] (a) at (\s*1+\x,0+\y) {};
\node[nod] (b) at (\s*.31+\x,\s*.95+\y) {};
\node[nod] (c) at (-\s*.81+\x,\s*.59+\y) {};
\node[nod] (d) at (-\s*.81+\x,-\s*.59+\y) {};
\node[nod] (e) at (\s*.31+\x,-\s*.95+\y) {};
\draw[line width=1pt] (c) -- (d) -- (b);
\draw[line width=1pt] (d) -- (e) -- (a);
\node (o) at (\x+.1,\y-.1) {8};
\def\x{6}
\def\y{3}
\def\s{1}
\node[nod] (a) at (\s*1+\x,0+\y) {};
\node[nod] (b) at (\s*.31+\x,\s*.95+\y) {};
\node[nod] (c) at (-\s*.81+\x,\s*.59+\y) {};
\node[nod] (d) at (-\s*.81+\x,-\s*.59+\y) {};
\node[nod] (e) at (\s*.31+\x,-\s*.95+\y) {};
\draw[line width=1pt] (a) -- (b) -- (c) -- (d) -- (e);
\node (o) at (\x,\y) {12};
\def\x{0}
\def\y{0}
\def\s{1}
\node[nod] (a) at (\s*1+\x,0+\y) {};
\node[nod] (b) at (\s*.31+\x,\s*.95+\y) {};
\node[nod] (c) at (-\s*.81+\x,\s*.59+\y) {};
\node[nod] (d) at (-\s*.81+\x,-\s*.59+\y) {};
\node[nod] (e) at (\s*.31+\x,-\s*.95+\y) {};
\draw[line width=1pt] (d) -- (c) -- (b) -- (d);
\draw[line width=1pt] (a) -- (e);
\node (o) at (\x+.1,-.1) {2};
\def\x{3}
\def\y{0}
\def\s{1}
\node[nod] (a) at (\s*1+\x,0+\y) {};
\node[nod] (b) at (\s*.31+\x,\s*.95+\y) {};
\node[nod] (c) at (-\s*.81+\x,\s*.59+\y) {};
\node[nod] (d) at (-\s*.81+\x,-\s*.59+\y) {};
\node[nod] (e) at (\s*.31+\x,-\s*.95+\y) {};
\draw[line width=1pt] (b) -- (c) -- (d) -- (b) -- (a);
\node (o) at (\x+.1,-.1) {9};
\def\x{6}
\def\y{0}
\def\s{1}
\node[nod] (a) at (\s*1+\x,0+\y) {};
\node[nod] (b) at (\s*.31+\x,\s*.95+\y) {};
\node[nod] (c) at (-\s*.81+\x,\s*.59+\y) {};
\node[nod] (d) at (-\s*.81+\x,-\s*.59+\y) {};
\node[nod] (e) at (\s*.31+\x,-\s*.95+\y) {};
\draw[line width=1pt] (d) -- (b) -- (c) -- (d);
\def\s{.9}
\coordinate (a) at (\s*1+\x,0+\y);
\coordinate (b) at (\s*.31+\x,\s*.95+\y);
\coordinate (c) at (-\s*.81+\x,\s*.59+\y);
\coordinate (d) at (-\s*.81+\x,-\s*.59+\y);
\coordinate (e) at (\s*.31+\x,-\s*.95+\y);
\draw[line width=.8pt] (d) -- (e) -- (a) -- (d);
\node (o) at (\x+.1,+.1) {9};
\def\x{9}
\def\y{0}
\def\s{1}
\node[nod] (a) at (\s*1+\x,0+\y) {};
\node[nod] (b) at (\s*.31+\x,\s*.95+\y) {};
\node[nod] (c) at (-\s*.81+\x,\s*.59+\y) {};
\node[nod] (d) at (-\s*.81+\x,-\s*.59+\y) {};
\node[nod] (e) at (\s*.31+\x,-\s*.95+\y) {};
\draw[line width=1pt] (a) -- (b) -- (c) -- (d);
\def\s{.9}
\coordinate (a) at (\s*1+\x,0+\y);
\coordinate (b) at (\s*.31+\x,\s*.95+\y);
\coordinate (c) at (-\s*.81+\x,\s*.59+\y);
\coordinate (d) at (-\s*.81+\x,-\s*.59+\y);
\coordinate (e) at (\s*.31+\x,-\s*.95+\y);
\draw[line width=.8pt] (b) -- (c) -- (e) -- (b);
\node (o) at (\x,+.1) {3};
\end{tikzpicture}
\caption{\label{h5k4} Nonequivalent admissible hypergraphs on 5 vertices with 4 hyperedges.}
\end{figure}
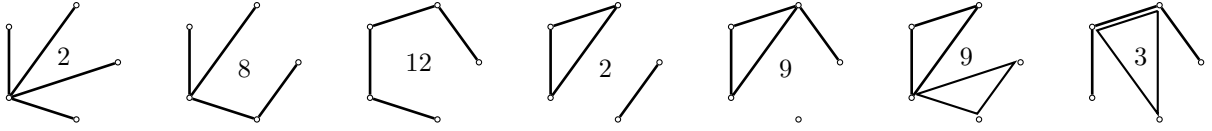

Associated LV-matrices are:
\[
\begin{pmatrix}
  a_{1} & a_{10} & a_{11} & a_{12} & a_{8} 
\\
 a_{2} & a_{3} & a_{11} & a_{12} & a_{8} 
\\
 a_{4} & a_{10} & a_{5} & a_{12} & a_{8} 
\\
 a_{6} & a_{10} & a_{11} & a_{7} & a_{8} 
\\
 a_{9} & a_{10} & a_{11} & a_{12} & a_{13} 
\end{pmatrix},\qquad
\begin{pmatrix}
  a_{1} & a_{6} & a_{11} & a_{12} & a_{8} 
\\
 a_{9} & a_{2} & a_{11} & a_{12} & a_{8} 
\\
 a_{3} & a_{6} & a_{4} & a_{12} & a_{8} 
\\
 a_{5} & a_{6} & a_{11} & a_{7} & a_{8} 
\\
 a_{9} & a_{10} & a_{11} & a_{12} & a_{13}  
\end{pmatrix},
\]
\begin{equation} \label{e1}
\begin{pmatrix}
  a_{1} & a_{10} & a_{6} & a_{12} & a_{8} 
\\
 a_{4} & a_{2} & a_{6} & a_{12} & a_{8} 
\\
 a_{9} & a_{10} & a_{3} & a_{12} & a_{8} 
\\
 a_{4} & a_{5} & a_{6} & a_{7} & a_{8} 
\\
 a_{9} & a_{10} & a_{11} & a_{12} & a_{13}  
\end{pmatrix}, \qquad
\begin{pmatrix}
  a_{1} & a_{5} & a_{3} & a_{7} & a_{8} 
\\
 a_{4} & a_{2} & a_{3} & a_{7} & a_{8} 
\\
 a_{4} & a_{5} & a_{6} & a_{7} & a_{8} 
\\
 a_{11} & a_{12} & a_{13} & a_{9} & a_{10} 
\\
 a_{11} & a_{12} & a_{13} & a_{14} & a_{15}  
\end{pmatrix},
\end{equation}
\begin{equation} \label{e2}
\begin{pmatrix}
  a_{1} & a_{7} & a_{8} & a_{5} & a_{10} 
\\
 a_{3} & a_{2} & a_{8} & a_{5} & a_{10} 
\\
 a_{3} & a_{7} & a_{4} & a_{5} & a_{10} 
\\
 a_{6} & a_{7} & a_{8} & a_{9} & a_{10} 
\\
 a_{11} & a_{12} & a_{13} & a_{14} & a_{15} 
\end{pmatrix}, \qquad
\begin{pmatrix}
  a_{1} & a_{10} & a_{11} & a_{2} & a_{3} 
\\
 a_{5} & a_{4} & a_{11} & a_{2} & a_{3} 
\\
 a_{5} & a_{10} & a_{6} & a_{2} & a_{3} 
\\
 a_{1}-\frac{\left(-a_{7}+a_{2}\right) \left(-a_{9}+a_{1}\right) \left(a_{8}-a_{13}\right)}{\left(a_{3}-a_{13}\right) \left(-a_{12}+a_{7}\right)} & a_{10} & a_{11} & a_{7} & a_{8} 
\\
 a_{9} & a_{10} & a_{11} & a_{12} & a_{13}  
\end{pmatrix}
\end{equation}
\begin{equation} \label{e3}
\begin{pmatrix}
  a_{1} & a_{2} & a_{11} & a_{12} & a_{3} 
\\
 a_{4} & a_{5} & a_{11} & a_{12} & a_{3} 
\\
 a_{6} & a_{2} & a_{7} & a_{12} & a_{3} 
\\
 a_{4} & a_{8} & a_{11} & a_{9} & a_{3} 
\\
 a_{1}-\frac{\left(a_{1}-a_{4}\right) \left(a_{5}-a_{10}\right)}{-a_{5}+a_{2}} & a_{10} & a_{11} & a_{12} & a_{13}  
\end{pmatrix}.
\end{equation}
We note that the LV-system related to the second matrix of \eqref{e1} and the LV-systems related to the matrices \eqref{e2} have 3 independent integrals only. The LV-system related to \eqref{e3} is a new superintegrable LV-system that is not equivalent to a tree-system. It has the same number of parameters as the 5-component tree-systems. The two nonisomorphic hypergraphs that are LV-equivalent are depicted in Fig. \ref{tnh}.

\begin{figure}[H]
\centering
\begin{tikzpicture}[scale=.8]
\tikzstyle{nod}= [circle, inner sep=0pt, fill=white, minimum size=2pt, draw]		
\def\x{0}
\def\y{0}
\def\s{1}
\node[nod] (a) at (\s*1+\x,0+\y) {};
\node[nod] (b) at (\s*.31+\x,\s*.95+\y) {};
\node[nod] (c) at (-\s*.81+\x,\s*.59+\y) {};
\node[nod] (d) at (-\s*.81+\x,-\s*.59+\y) {};
\node[nod] (e) at (\s*.31+\x,-\s*.95+\y) {};
\draw[line width=1pt] (e) -- (d);
\draw[line width=1pt] (b) -- (c);
\def\s{.9}
\coordinate (a) at (\s*1+\x,0+\y);
\coordinate (b) at (\s*.31+\x,\s*.95+\y);
\coordinate (c) at (-\s*.81+\x,\s*.59+\y);
\coordinate (d) at (-\s*.81+\x,-\s*.59+\y);
\coordinate (e) at (\s*.31+\x,-\s*.95+\y);
\draw[line width=.8pt] (d) -- (a) -- (b) -- (c) -- (d);
\def\s{.8}
\coordinate (a) at (\s*1+\x,0+\y);
\coordinate (b) at (\s*.31+\x,\s*.95+\y);
\coordinate (c) at (-\s*.81+\x,\s*.59+\y);
\coordinate (d) at (-\s*.81+\x,-\s*.59+\y);
\coordinate (e) at (\s*.31+\x,-\s*.95+\y);
\draw[line width=.8pt] (d) -- (b) -- (c) -- (d);
\def\x{3}
\def\y{0}
\def\s{1}
\node[nod] (a) at (\s*1+\x,0+\y) {};
\node[nod] (b) at (\s*.31+\x,\s*.95+\y) {};
\node[nod] (c) at (-\s*.81+\x,\s*.59+\y) {};
\node[nod] (d) at (-\s*.81+\x,-\s*.59+\y) {};
\node[nod] (e) at (\s*.31+\x,-\s*.95+\y) {};
\draw[line width=1pt] (c) -- (d);
\draw[line width=1pt] (b) -- (a);
\def\s{.9}
\coordinate (a) at (\s*1+\x,0+\y);
\coordinate (b) at (\s*.31+\x,\s*.95+\y);
\coordinate (c) at (-\s*.81+\x,\s*.59+\y);
\coordinate (d) at (-\s*.81+\x,-\s*.59+\y);
\coordinate (e) at (\s*.31+\x,-\s*.95+\y);
\draw[line width=.8pt] (d) -- (e) -- (a) -- (b) -- (c) -- (d);
\def\s{.8}
\coordinate (a) at (\s*1+\x,0+\y);
\coordinate (b) at (\s*.31+\x,\s*.95+\y);
\coordinate (c) at (-\s*.81+\x,\s*.59+\y);
\coordinate (d) at (-\s*.81+\x,-\s*.59+\y);
\coordinate (e) at (\s*.31+\x,-\s*.95+\y);
\draw[line width=.8pt] (d) -- (a) -- (b) -- (c) -- (d);
\end{tikzpicture}
\caption{\label{tnh} The two hypergraphs that are LV-equivalent to the last hypergraph in Fig. \ref{h5k4}.}
\end{figure}
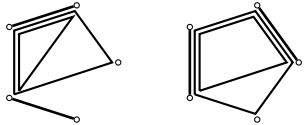

\item Representatives of the size 5 admissible hypergraph LV-equivalence classes are depicted in Fig. \ref{h5k5}.

\begin{figure}[H]
\centering
\begin{tikzpicture}[scale=.8]
\tikzstyle{nod}= [circle, inner sep=0pt, fill=white, minimum size=2pt, draw]		
\def\x{0}
\def\y{0}
\def\s{1}
\node[nod] (a) at (\s*1+\x,0+\y) {};
\node[nod] (b) at (\s*.31+\x,\s*.95+\y) {};
\node[nod] (c) at (-\s*.81+\x,\s*.59+\y) {};
\node[nod] (d) at (-\s*.81+\x,-\s*.59+\y) {};
\node[nod] (e) at (\s*.31+\x,-\s*.95+\y) {};
\draw[line width=1pt] (c) -- (d) -- (b) -- (c);
\draw[line width=1pt] (e) -- (d) -- (a);
\node (o) at (\x+.1,+.1) {9};
\def\x{3}
\def\y{0}
\def\s{1}
\node[nod] (a) at (\s*1+\x,0+\y) {};
\node[nod] (b) at (\s*.31+\x,\s*.95+\y) {};
\node[nod] (c) at (-\s*.81+\x,\s*.59+\y) {};
\node[nod] (d) at (-\s*.81+\x,-\s*.59+\y) {};
\node[nod] (e) at (\s*.31+\x,-\s*.95+\y) {};
\draw[line width=1pt] (b) -- (c) -- (d) -- (b) -- (a);
\draw[line width=1pt] (d) -- (e);
\node (o) at (\x+.1,-.1) {19};
\def\x{6}
\def\y{0}
\def\s{1}
\node[nod] (a) at (\s*1+\x,0+\y) {};
\node[nod] (b) at (\s*.31+\x,\s*.95+\y) {};
\node[nod] (c) at (-\s*.81+\x,\s*.59+\y) {};
\node[nod] (d) at (-\s*.81+\x,-\s*.59+\y) {};
\node[nod] (e) at (\s*.31+\x,-\s*.95+\y) {};
\draw[line width=1pt] (c) -- (d) -- (b) -- (c);
\draw[line width=1pt] (d) -- (e) -- (a);
\node (o) at (\x+.1,-.1) {24};
\def\x{9}
\def\y{0}
\def\s{1}
\node[nod] (a) at (\s*1+\x,0+\y) {};
\node[nod] (b) at (\s*.31+\x,\s*.95+\y) {};
\node[nod] (c) at (-\s*.81+\x,\s*.59+\y) {};
\node[nod] (d) at (-\s*.81+\x,-\s*.59+\y) {};
\node[nod] (e) at (\s*.31+\x,-\s*.95+\y) {};
\draw[line width=1pt] (a) -- (e) -- (d) -- (b) -- (e);
\def\s{.9}
\coordinate (a) at (\s*1+\x,0+\y);
\coordinate (b) at (\s*.17+\x,\s*.92+\y);
\coordinate (c) at (-\s*.81+\x,\s*.59+\y);
\coordinate (d) at (-\s*.81+\x,-\s*.42+\y);
\coordinate (e) at (\s*.31+\x,-\s*.95+\y);
\draw[line width=.8pt] (d) -- (c) -- (b) -- (d);
\node (o) at (\x,-.1) {8};
\end{tikzpicture}
\caption{\label{h5k5} Nonequivalent admissible hypergraphs on 5 vertices with 5 hyperedges.}
\end{figure}
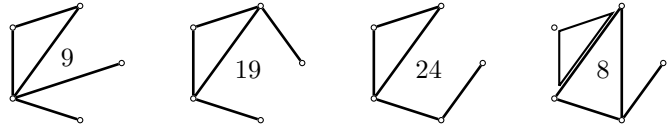

As the LV-systems are subsystems of the LV-systems with 4 DPS, we do not give their matrices explicitly. In each case, the number of parameters is 12.

\item Representatives of size 6,7 and 10 admissible hypergraph LV-equivalence classes are depicted in Fig. \ref{h5k6710}.
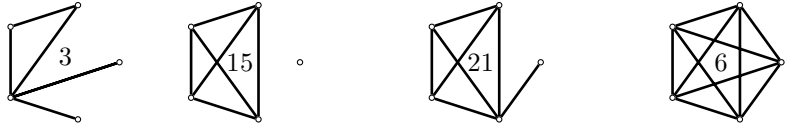
\begin{figure}[H]
\centering
\begin{tikzpicture}[scale=.8]
\tikzstyle{nod}= [circle, inner sep=0pt, fill=white, minimum size=2pt, draw]		
\def\x{0}
\def\y{0}
\def\s{1}
\node[nod] (a) at (\s*1+\x,0+\y) {};
\node[nod] (b) at (\s*.31+\x,\s*.95+\y) {};
\node[nod] (c) at (-\s*.81+\x,\s*.59+\y) {};
\node[nod] (d) at (-\s*.81+\x,-\s*.59+\y) {};
\node[nod] (e) at (\s*.31+\x,-\s*.95+\y) {};
\draw[line width=1pt] (c) -- (d) -- (b) -- (c);
\draw[line width=1pt] (e) -- (d) -- (a) -- (d);
\node (o) at (\x+.1,+.1) {3};
\def\x{3}
\def\y{0}
\def\s{1}
\node[nod] (a) at (\s*1+\x,0+\y) {};
\node[nod] (b) at (\s*.31+\x,\s*.95+\y) {};
\node[nod] (c) at (-\s*.81+\x,\s*.59+\y) {};
\node[nod] (d) at (-\s*.81+\x,-\s*.59+\y) {};
\node[nod] (e) at (\s*.31+\x,-\s*.95+\y) {};
\draw[line width=1pt] (b) -- (c) -- (d) -- (e) -- (c);
\draw[line width=1pt] (d) -- (b) -- (e);
\node (o) at (\x,0) {15};
\def\x{7}
\def\y{0}
\def\s{1}
\node[nod] (a) at (\s*1+\x,0+\y) {};
\node[nod] (b) at (\s*.31+\x,\s*.95+\y) {};
\node[nod] (c) at (-\s*.81+\x,\s*.59+\y) {};
\node[nod] (d) at (-\s*.81+\x,-\s*.59+\y) {};
\node[nod] (e) at (\s*.31+\x,-\s*.95+\y) {};
\draw[line width=1pt] (b) -- (c) -- (d) -- (e) -- (c);
\draw[line width=1pt] (d) -- (b) -- (e) -- (a);
\node (o) at (\x,0) {21};
\def\x{11}
\def\y{0}
\def\s{1}
\node[nod] (a) at (\s*1+\x,0+\y) {};
\node[nod] (b) at (\s*.31+\x,\s*.95+\y) {};
\node[nod] (c) at (-\s*.81+\x,\s*.59+\y) {};
\node[nod] (d) at (-\s*.81+\x,-\s*.59+\y) {};
\node[nod] (e) at (\s*.31+\x,-\s*.95+\y) {};
\draw[line width=1pt] (a) -- (b) -- (c) -- (d) -- (e) -- (a) -- (c) -- (e) -- (b) -- (d) -- (a);
\node (o) at (\x,0) {6};
\end{tikzpicture}
\caption{\label{h5k6710} Nonequivalent admissible hypergraphs on 5 vertices with 6,7 or 10 hyperedges.}
\end{figure}

The number of parameters in these LV-systems is respectively 14, 11, 11, and 10. Any LV-system with $k>7$ DPs also has $7$ DPs. There are 9 nonisomorphic hypergraphs of size 8 which have the representative admissible hypergraph of size 7 as subgraph. None of these are admissible, but two of them give rise to a hypergraph of size 10 (cf. Examples \ref{exam1} and \ref{exam2}). There are 6 nonisomorphic hypergraphs of size 10 that are LV-equivalent to the complete graph of order 5. 
\end{enumerate}

\section{Concluding remarks}
We have classified homogeneous $n$-component LV-systems with $k$ additional linear Darboux polynomials, up to linear transformations, for $n\leq5$. The number of them is given in Table \ref{cls}.
\begin{table}[h]
\begin{center}
\begin{tabular}{c|cccccccccc}
$n \setminus k$ & 1 & 2 & 3 & 4 & 5 & 6 & 7 & 8 & 9 & 10 \\
\hline
2 & 1 & & & & & & & & & \\
3 & 2 & 1 & 1 & 0 & & & & & & \\
4 & 3 & 3 & 3 & 1 & 0 & 1 & 0 & 0 & 0 & 0 \\
5 & 4 & 6 & 7 & 7 & 4 & 2 & 1 & 0 & 0 & 1 \\
\end{tabular}
\vspace{2mm}

\parbox{12.5cm}{\caption{\label{cls} The number of $n$-component LV-systems with $k$ additional linear DPs.}}
\end{center}
\end{table}
\vspace{-8pt} 

Each DP gives rise to an integral, however, they are not necessarily functionally independent. The maximum number of independent integrals for an $n$-component system is $n-1$. For an admissible hypergraph of size $k$, which contains a subgraph of order $o$ and size $l>o-1$ the number of independent integrals for the corresponding LV-system is at most $k-l+o-1$. The number of LV-systems with functionally independent integrals of the form \eqref{Ik} is given in Table \ref{cli}. (Note that LV-systems with more linear DPs than independent integrals are subsystems of a larger system, i.e. with more pararameters. These are not counted.)

\begin{table}[h]
\begin{center}
\begin{tabular}{c|cccccccccc}
$n \setminus k$ & 1 & 2 & 3 & 4 \\
\hline
2 & 1 & & & \\
3 & 2 & 1 & &  \\
4 & 3 & 3 & 2 & \\
5 & 4 & 6 & 6 & 4 \\
\end{tabular}
\vspace{2mm}

\parbox{12.5cm}{\caption{\label{cli} The number of $n$-component LV-systems with $k$ functionally independent integrals of the form \eqref{Ik}.}}
\end{center}
\end{table}
\vspace{-8pt} 

The homogeneous systems we have classified, of the form \eqref{LVsys} with $\b={\bf 0}$, can be generalised to nonhomogeneous systems as follows. Suppose the hypergraph $H$ of the homogeneous LV-system is a hyperforest. By this we mean that the set $\N_n$ is a disjoint union of sets $\N_n=\cup_{j=1}^r M_j$ such that for each hyperedge $h\in H$ we have $h\in M_j$ for some $j$. Then, for arbitrary constants $c_j$, $j=1,\ldots,r$, the nonhomogeneous system \eqref{LVsys} with $b_i=c_j$ for all $i\in M_j$ has the same DPs.

Of special interest are the $n$-component LV-systems which have $n-1$ functionally independent integrals, the so-called superintegrable ones. A large class of such systems are associated to tree-systems, cf. \cite{QTMK,trees}. For $n\leq 4$ all superintegrable hypergraph-systems are tree-systems. For $n=5$ this is not true. The LV-system with matrix \eqref{e3} is a hypergraph-system which is not a tree-system. For $n>5$, we expect there exist more superintegrable hypergraph-systems that are not equivalent to tree-systems.

The reader should be able, for each of the matrices we have provided, to figure out what the labeling is on the corresponding hypergraph, by inspecting which of the DP-conditions are satisfied, cf. Lemma \ref{lem} and Remark \ref{oc}. For the matrix \eqref{e3}, which corresponds to the right-most hypergraph in Fig. \ref{h5k4}, it is easy to see
that the 2-edges are $\{1,2\},\{1,3\},\{2,4\}$ and the 3-edge is $\{1,2,5\}$. The DPs are
\begin{align*}
&\left(a_{4}-a_{1}\right) x_{1}+\left(a_{5}-a_{2}\right) x_{2},\quad
\left(a_{6}-a_{1}\right) x_{1}+\left(a_{7}-a_{11}\right) x_{3},\quad
\left(a_{8}-a_{5}\right) x_{2}+\left(a_{9}-a_{12}\right) x_{4},\\
&\left(a_{1}-a_{4}\right)\left(a_{5}-a_{10}\right)x_{1}
+\left(a_{2}-a_{5}\right)\left(a_{5}-a_{10}\right)x_{2}
+\left(a_{2}-a_{5}\right)\left(a_{3}-a_{13}\right)x_{5},
\end{align*}
and 4 functionally independent integrals can be obtained from the DPs by using equation \eqref{Ik}.

We have determined the sets of hypergraphs that are LV-equivalent to trees of order $n<9$. Their sizes are given in Table \ref{seh}, as well as in the appendix. It confirms the following statement for trees of order smaller than nine.
\begin{conjecture}
Nonisomorphic trees are not LV-equivalent.
\end{conjecture}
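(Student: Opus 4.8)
The plan is to attach to every homogeneous LV-system a linear-equivalence \emph{invariant} built from its linear Darboux polynomials and their cofactors, and to show that for tree-systems this invariant already pins down the tree up to isomorphism. Since a change of variables whose components are DPs sends DPs to DPs and preserves each cofactor as a function, the set $\mathcal{L}$ of lines spanned by the linear DPs, together with the cofactor map $\kappa$ assigning to each $P\in\mathcal{L}$ its cofactor (which is linear by Lemma \ref{lem}), is carried by an LV-equivalence to an $\mathrm{GL}_n$-translate of itself inside the dual space $V^{*}=\langle x_1,\dots,x_n\rangle$. Thus if $T$ and $T'$ are LV-equivalent, their pairs $(\mathcal{L},\kappa)$ and $(\mathcal{L}',\kappa')$ are $\mathrm{GL}_n$-related. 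It then suffices to prove that $(\mathcal{L},\kappa)$ determines the isomorphism type of $T$: the contrapositive is exactly the conjecture. Note I do not need completeness of the invariant, only its invariance and its ability to recover $T$.

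First I would compute $(\mathcal{L},\kappa)$ for a generic tree-system on a tree $T$. By condition \eqref{cond2}, the support $I$ of any linear DP must be a subtree of $T$: the off-diagonal entries of a column $k$ are forced to be constant across the components of $T\setminus\{k\}$, so rows agree off $I$ precisely when $I$ is connected in $T$. Among subtrees, the solvability of \eqref{cond3} with all $\alpha_j\neq 0$ is automatic only for singletons and edges; for $|I|\ge 3$ it imposes the degree-three relations $C_{h,j,k}=0$ of Remark \ref{oc}, cf. \eqref{rc}, which are \emph{not} satisfied for generic values of the $3n-2$ tree-parameters. Hence generically $\mathcal{L}$ consists of exactly $2n-1$ lines: the $n$ coordinates $x_i$, with $\kappa(x_i)=\ell_i$ the $i$th row of $\A$, and the $n-1$ edge-DPs $P_{ij}=\ell_i-\ell_j$ for $\{i,j\}\in E(T)$, whose cofactors obey the incidences $\kappa(P_{ij})-\ell_i\in\langle x_j\rangle$ and $\kappa(P_{ij})-\ell_j\in\langle x_i\rangle$.

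Next I would reconstruct $T$ from $(\mathcal{L},\kappa)$. One isolates intrinsically a set of $n$ distinguished ``vertex'' lines in $\mathcal{L}$ — using that they form a basis of $V^{*}$ while each remaining line lies in the span of exactly two of them, and that the cofactor incidences above characterise the endpoints of each edge-line — recovers the cofactors $\ell_i=\kappa(\text{vertex}_i)$, and reads off $E(T)$ as those pairs whose cofactor-difference $\ell_i-\ell_j$ is again a member of $\mathcal{L}$. The resulting labelled graph is $T$ up to relabelling, i.e.\ up to isomorphism. Combined with the invariance established above, nonisomorphic trees then yield non-$\mathrm{GL}_n$-equivalent pairs $(\mathcal{L},\kappa)$ and so cannot be LV-equivalent.

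The hard part will be the intrinsic identification of the vertex lines. LV-equivalence deliberately promotes edge-DPs to coordinates — this is the very reason nonisomorphic hypergraphs can be equivalent — so there is no a priori preferred basis of $\mathcal{L}$, and the linear (graphic) matroid of the $2n-1$ lines is useless here because every tree has the same free matroid. The cofactor decoration must therefore be used essentially; I expect the crucial lemma to be that the triple relations $\kappa(P)-\kappa(Q)\in\langle R\rangle$ inside $\mathcal{L}$ single out the vertex set, equivalently that any two bases of $\mathcal{L}$ realising $\mathcal{L}$ as an all-$2$ hypergraph produce isomorphic trees. A secondary technical point is genericity: one must verify that specialising the $3n-2$ parameters does not enlarge $\mathcal{L}$ with extra linear DPs, and confirm that the conjecture is to be read as a statement about these generic parametric classes.
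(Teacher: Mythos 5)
Be aware first that the paper contains no proof of this statement: it is posed as a conjecture, and the only evidence offered is the exhaustive computation, for every tree of order $n<9$, of the set of LV-equivalent hypergraphs (Table \ref{seh} and the appendix). So there is no argument of the paper's to compare yours against; what you propose is genuinely new. That said, the first two stages of your plan are sound. Invariance of the pair $(\mathcal{L},\kappa)$ under a change of variables whose components are DPs is correct, and your description of $\mathcal{L}$ for a generic tree-system is right: in the tree parametrisation $A_{i,k}=A_{j,k}$ exactly when $i$ and $j$ lie in the same component of $T\setminus\{k\}$, so \eqref{cond2} forces the support of a linear DP to induce a subtree, and for a subtree on three or more vertices the conditions $C_{h,j,k}=0$ of \eqref{rc} are generically violated (for the path on $\{1,2,3\}$ in \eqref{xn3} one finds $C_{1,2,3}=(c_1-a_1)(b_2-a_3)(2a_2-b_1-c_2)$, not identically zero; an analogous one-line check is needed for each subtree). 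Hence $|\mathcal{L}|=2n-1$ generically, and a cardinality count shows that if a generic $T$-system is carried onto a member of the $T'$-class, the two standard configurations are $\mathrm{GL}_n$-related. This correctly reduces the conjecture to a statement about a configuration of $2n-1$ lines.

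The genuine gap is the reconstruction step, and it is not a technicality: it is the entire content of the conjecture. What has to be proved is that whenever $B\subset\mathcal{L}$ is a basis such that each of the remaining $n-1$ lines lies in the span of exactly two elements of $B$, the tree so obtained is isomorphic to $T$. You name this as ``the hard part'' and ``expect'' the crucial lemma, but give no argument for it, so the proposal as it stands is a useful reformulation rather than a proof. Two further remarks. Your reason for discarding the matroid is false: $2n-1$ lines in rank $n$ never form a free matroid, and generically the only $3$-element circuits of $\mathcal{L}$ are the triples $\{x_i,x_j,P_{ij}\}$ with $\{i,j\}\in T$, so the undecorated matroid already encodes the edge set; this, rather than the cofactor data, is where I would look for the missing lemma (for instance, the elements of $\mathcal{L}$ lying in at least two $3$-circuits are exactly the coordinates of the internal vertices, and the residual leaf/edge ambiguity within a single $3$-circuit does not affect the isomorphism type). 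Finally, yes, the conjecture is to be read at the level of generic parametric classes, which is exactly the level at which your pointwise argument operates.
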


\begin{table}[H]
\begin{center}
\begin{tabular}{c|c|c}
$n$ & trees & LV-equivalent hypergraphs \\
\hline
2 & 1 & 1 \\
3 & 1 & 2 \\
4 & 2 & 2, 5 \\
5 & 3 & 2, 8, 12 \\
6 & 6 & 2, 8, 5, 18, 21, 30 \\
7 & 11 & 2, 8, 8, 18, 21, 30, 22, 79, 12, 55, 76 \\
8 & 23 & 2, 8, 8, 18, 21, 5, 30, 30, 32, 79, 21, 55, 18, 48, 60, 79, 126, 79, 112, 207, 30, 144, 195 
\end{tabular}
\vspace{2mm}

\parbox{12.5cm}{\caption{\label{seh} The number of trees of order $n<9$ and the numbers of LV-equivalent hypergraphs.}}
\end{center}
\end{table}

\appendix

\section{Trees of orders $6,7,8$ and the number of LV-equivalent hypergraphs}
In Figs. \ref{t6}, \ref{t7} and \ref{t8}, we provide for each tree of order 6,7 and 8 the number of LV-equivalent hypergraphs.

\begin{figure}[H]
\centering
\begin{tikzpicture}[scale=.8]
\tikzstyle{nod}= [circle, inner sep=0pt, fill=white, minimum size=2pt, draw]		
\def\x{0}
\def\y{0}
\node[nod] (a) at (1+\x,\y) {};
\node[nod] (b) at (.5+\x,.866+\y) {};
\node[nod] (c) at (-.5+\x,.866+\y) {};
\node[nod] (d) at (-1+\x,\y) {};
\node[nod] (e) at (-.5+\x,-.866+\y) {};
\node[nod] (f) at (.5+\x,-.866+\y) {};
\draw[line width=1pt] (d) -- (e) -- (b); \draw[line width=1pt] (c) -- (e) -- (a);
\draw[line width=1pt] (e) -- (f);
\node (o) at (\x-.2,\y+.3) {2};
\def\x{3}
\def\y{0}
\node[nod] (a) at (1+\x,\y) {};
\node[nod] (b) at (.5+\x,.866+\y) {};
\node[nod] (c) at (-.5+\x,.866+\y) {};
\node[nod] (d) at (-1+\x,\y) {};
\node[nod] (e) at (-.5+\x,-.866+\y) {};
\node[nod] (f) at (.5+\x,-.866+\y) {};
\draw[line width=1pt] (d) -- (e) -- (b); \draw[line width=1pt] (c) -- (e);
\draw[line width=1pt] (e) -- (f) -- (a);
\node (o) at (\x-.2,\y+.3) {8};
\def\x{6}
\def\y{0}
\node[nod] (a) at (1+\x,\y) {};
\node[nod] (b) at (.5+\x,.866+\y) {};
\node[nod] (c) at (-.5+\x,.866+\y) {};
\node[nod] (d) at (-1+\x,\y) {};
\node[nod] (e) at (-.5+\x,-.866+\y) {};
\node[nod] (f) at (.5+\x,-.866+\y) {};
\draw[line width=1pt] (d) -- (e) -- (c); \draw[line width=1pt] (f) -- (b);
\draw[line width=1pt] (e) -- (f) -- (a);
\node (o) at (\x,\y) {8};
\def\x{9}
\def\y{0}
\node[nod] (a) at (1+\x,\y) {};
\node[nod] (b) at (.5+\x,.866+\y) {};
\node[nod] (c) at (-.5+\x,.866+\y) {};
\node[nod] (d) at (-1+\x,\y) {};
\node[nod] (e) at (-.5+\x,-.866+\y) {};
\node[nod] (f) at (.5+\x,-.866+\y) {};
\draw[line width=1pt] (d) -- (e) -- (c) -- (b); 
\draw[line width=1pt] (e) -- (f) -- (a);
\node (o) at (\x,\y) {18};
\def\x{12}
\def\y{0}
\node[nod] (a) at (1+\x,\y) {};
\node[nod] (b) at (.5+\x,.866+\y) {};
\node[nod] (c) at (-.5+\x,.866+\y) {};
\node[nod] (d) at (-1+\x,\y) {};
\node[nod] (e) at (-.5+\x,-.866+\y) {};
\node[nod] (f) at (.5+\x,-.866+\y) {};
\draw[line width=1pt] (d) -- (e) -- (c) ; 
\draw[line width=1pt] (e) -- (f) -- (a)-- (b);
\node (o) at (\x,\y) {21};
\def\x{15}
\def\y{0}
\node[nod] (a) at (1+\x,\y) {};
\node[nod] (b) at (.5+\x,.866+\y) {};
\node[nod] (c) at (-.5+\x,.866+\y) {};
\node[nod] (d) at (-1+\x,\y) {};
\node[nod] (e) at (-.5+\x,-.866+\y) {};
\node[nod] (f) at (.5+\x,-.866+\y) {};
\draw[line width=1pt] (c) -- (d) -- (e) ; 
\draw[line width=1pt] (e) -- (f) -- (a)-- (b);
\node (o) at (\x,\y) {30};
\end{tikzpicture}
\caption{\label{t6} Trees of order 6 and the number of LV-equivalent nonisomorphic hypergraphs.}
\end{figure}

\begin{figure}[H]
\centering
\begin{tikzpicture}[scale=.8]
\tikzstyle{nod}= [circle, inner sep=0pt, fill=white, minimum size=2pt, draw]		
\def\x{0}
\def\y{3}
\node[nod] (a) at (1+\x,\y) {};
\node[nod] (b) at (.623+\x,.782+\y) {};
\node[nod] (c) at (-.223+\x,.975+\y) {};
\node[nod] (d) at (-.901+\x,.434+\y) {};
\node[nod] (e) at (-.901+\x,-.434+\y) {};
\node[nod] (f) at (-.223+\x,-.975+\y) {};
\node[nod] (g) at (.623+\x,-.782+\y) {};
\draw[line width=1pt] (d) -- (f) -- (b); \draw[line width=1pt] (c) -- (f) -- (a);
\draw[line width=1pt] (g) -- (f) -- (e);
\node (o) at (\x+.05,\y+.3) {2};
\def\x{3}
\def\y{3}
\node[nod] (a) at (1+\x,\y) {};
\node[nod] (b) at (.623+\x,.782+\y) {};
\node[nod] (c) at (-.223+\x,.975+\y) {};
\node[nod] (d) at (-.901+\x,.434+\y) {};
\node[nod] (e) at (-.901+\x,-.434+\y) {};
\node[nod] (f) at (-.223+\x,-.975+\y) {};
\node[nod] (g) at (.623+\x,-.782+\y) {};
\draw[line width=1pt] (d) -- (f) -- (b); \draw[line width=1pt] (c) -- (f);
\draw[line width=1pt] (e) -- (f) -- (g) -- (a);
\node (o) at (\x+.05,\y+.3) {8};
\def\x{6}
\def\y{3}
\node[nod] (a) at (1+\x,\y) {};
\node[nod] (b) at (.623+\x,.782+\y) {};
\node[nod] (c) at (-.223+\x,.975+\y) {};
\node[nod] (d) at (-.901+\x,.434+\y) {};
\node[nod] (e) at (-.901+\x,-.434+\y) {};
\node[nod] (f) at (-.223+\x,-.975+\y) {};
\node[nod] (g) at (.623+\x,-.782+\y) {};
\draw[line width=1pt] (d) -- (f) -- (c); \draw[line width=1pt] (g) -- (b);
\draw[line width=1pt] (e) -- (f) -- (g) -- (a);
\node (o) at (\x+.2,\y) {8};
\def\x{9}
\def\y{3}
\node[nod] (a) at (1+\x,\y) {};
\node[nod] (b) at (.623+\x,.782+\y) {};
\node[nod] (c) at (-.223+\x,.975+\y) {};
\node[nod] (d) at (-.901+\x,.434+\y) {};
\node[nod] (e) at (-.901+\x,-.434+\y) {};
\node[nod] (f) at (-.223+\x,-.975+\y) {};
\node[nod] (g) at (.623+\x,-.782+\y) {};
\draw[line width=1pt] (d) -- (f) -- (c) -- (b);
\draw[line width=1pt] (e) -- (f) -- (g) -- (a);
\node (o) at (\x+.25,\y) {18};
\def\x{12}
\def\y{3}
\node[nod] (a) at (1+\x,\y) {};
\node[nod] (b) at (.623+\x,.782+\y) {};
\node[nod] (c) at (-.223+\x,.975+\y) {};
\node[nod] (d) at (-.901+\x,.434+\y) {};
\node[nod] (e) at (-.901+\x,-.434+\y) {};
\node[nod] (f) at (-.223+\x,-.975+\y) {};
\node[nod] (g) at (.623+\x,-.782+\y) {};
\draw[line width=1pt] (d) -- (f) -- (c);
\draw[line width=1pt] (e) -- (f) -- (g) -- (a) -- (b);
\node (o) at (\x+.25,\y) {21};
\def\x{0}
\def\y{0}
\node[nod] (a) at (1+\x,\y) {};
\node[nod] (b) at (.623+\x,.782+\y) {};
\node[nod] (c) at (-.223+\x,.975+\y) {};
\node[nod] (d) at (-.901+\x,.434+\y) {};
\node[nod] (e) at (-.901+\x,-.434+\y) {};
\node[nod] (f) at (-.223+\x,-.975+\y) {};
\node[nod] (g) at (.623+\x,-.782+\y) {};
\draw[line width=1pt] (c) -- (d) -- (f) -- (e);
\draw[line width=1pt] (f) -- (g) -- (a); \draw[line width=1pt] (g) -- (b); 
\node (o) at (\x,\y) {30};
\def\x{3}
\def\y{0}
\node[nod] (a) at (1+\x,\y) {};
\node[nod] (b) at (.623+\x,.782+\y) {};
\node[nod] (c) at (-.223+\x,.975+\y) {};
\node[nod] (d) at (-.901+\x,.434+\y) {};
\node[nod] (e) at (-.901+\x,-.434+\y) {};
\node[nod] (f) at (-.223+\x,-.975+\y) {};
\node[nod] (g) at (.623+\x,-.782+\y) {};
\draw[line width=1pt] (f) -- (e) -- (d);
\draw[line width=1pt] (f) -- (c) -- (b); \draw[line width=1pt] (f) -- (g) -- (a); 
\node (o) at (\x+.25,\y) {22};
\def\x{6}
\def\y{0}
\node[nod] (a) at (1+\x,\y) {};
\node[nod] (b) at (.623+\x,.782+\y) {};
\node[nod] (c) at (-.223+\x,.975+\y) {};
\node[nod] (d) at (-.901+\x,.434+\y) {};
\node[nod] (e) at (-.901+\x,-.434+\y) {};
\node[nod] (f) at (-.223+\x,-.975+\y) {};
\node[nod] (g) at (.623+\x,-.782+\y) {};
\draw[line width=1pt] (f) -- (e) -- (d);
\draw[line width=1pt] (f) -- (c); \draw[line width=1pt] (f) -- (g) -- (a) -- (b); 
\node (o) at (\x+.25,\y) {79};
\def\x{9}
\def\y{0}
\node[nod] (a) at (1+\x,\y) {};
\node[nod] (b) at (.623+\x,.782+\y) {};
\node[nod] (c) at (-.223+\x,.975+\y) {};
\node[nod] (d) at (-.901+\x,.434+\y) {};
\node[nod] (e) at (-.901+\x,-.434+\y) {};
\node[nod] (f) at (-.223+\x,-.975+\y) {};
\node[nod] (g) at (.623+\x,-.782+\y) {};
\draw[line width=1pt] (e) -- (f) -- (d);
\draw[line width=1pt] (f) -- (c) -- (b) -- (a);
\draw[line width=1pt] (b) -- (g);
\node (o) at (\x+.2,\y) {12};
\def\x{12}
\def\y{0}
\node[nod] (a) at (1+\x,\y) {};
\node[nod] (b) at (.623+\x,.782+\y) {};
\node[nod] (c) at (-.223+\x,.975+\y) {};
\node[nod] (d) at (-.901+\x,.434+\y) {};
\node[nod] (e) at (-.901+\x,-.434+\y) {};
\node[nod] (f) at (-.223+\x,-.975+\y) {};
\node[nod] (g) at (.623+\x,-.782+\y) {};
\draw[line width=1pt] (e) -- (f) -- (d);
\draw[line width=1pt] (f) -- (g) -- (a) -- (b) -- (c);
\node (o) at (\x+.05,\y) {55};
\def\x{15}
\def\y{0}
\node[nod] (a) at (1+\x,\y) {};
\node[nod] (b) at (.623+\x,.782+\y) {};
\node[nod] (c) at (-.223+\x,.975+\y) {};
\node[nod] (d) at (-.901+\x,.434+\y) {};
\node[nod] (e) at (-.901+\x,-.434+\y) {};
\node[nod] (f) at (-.223+\x,-.975+\y) {};
\node[nod] (g) at (.623+\x,-.782+\y) {};
\draw[line width=1pt] (d) -- (e) -- (f) -- (g) -- (a) -- (b) -- (c);
\node (o) at (\x,\y) {76};
\end{tikzpicture}
\caption{\label{t7} Trees of order 7 and the number of LV-equivalent nonisomorphic hypergraphs.}
\end{figure}

\begin{figure}[H]
\centering
\begin{tikzpicture}[scale=.8]
\tikzstyle{nod}= [circle, inner sep=0pt, fill=white, minimum size=2pt, draw]
\def\x{0}
\def\y{9}
\node[nod] (a) at (1+\x,\y) {};
\node[nod] (b) at (.707+\x,.707+\y) {};
\node[nod] (c) at (0+\x,1+\y) {};
\node[nod] (d) at (-.707+\x,.707+\y) {};
\node[nod] (e) at (-1+\x,0+\y) {};
\node[nod] (f) at (-.707+\x,-.707+\y) {};
\node[nod] (g) at (\x,-1+\y) {};
\node[nod] (h) at (.707+\x,-.707+\y) {};
\draw[line width=1pt] (a) -- (f) -- (e); \draw[line width=1pt] (b) -- (f) -- (g);
\draw[line width=1pt] (c) -- (f) -- (h); \draw[line width=1pt] (d) -- (f);
\node (o) at (\x+.15,\y+.6) {2};
\def\x{3}
\def\y{9}
\node[nod] (a) at (1+\x,\y) {};
\node[nod] (b) at (.707+\x,.707+\y) {};
\node[nod] (c) at (0+\x,1+\y) {};
\node[nod] (d) at (-.707+\x,.707+\y) {};
\node[nod] (e) at (-1+\x,0+\y) {};
\node[nod] (f) at (-.707+\x,-.707+\y) {};
\node[nod] (g) at (\x,-1+\y) {};
\node[nod] (h) at (.707+\x,-.707+\y) {};
\draw[line width=1pt] (a) -- (f) -- (e); \draw[line width=1pt] (b) -- (f) -- (g)-- (h);
\draw[line width=1pt] (c) -- (f); \draw[line width=1pt] (d) -- (f);
\node (o) at (\x+.15,\y+.6) {8};
\def\x{6}
\def\y{9}
\node[nod] (a) at (1+\x,\y) {};
\node[nod] (b) at (.707+\x,.707+\y) {};
\node[nod] (c) at (0+\x,1+\y) {};
\node[nod] (d) at (-.707+\x,.707+\y) {};
\node[nod] (e) at (-1+\x,0+\y) {};
\node[nod] (f) at (-.707+\x,-.707+\y) {};
\node[nod] (g) at (\x,-1+\y) {};
\node[nod] (h) at (.707+\x,-.707+\y) {};
\draw[line width=1pt] (f) -- (e); 
\draw[line width=1pt] (g) -- (a); \draw[line width=1pt] (b) -- (f) -- (g)-- (h);
\draw[line width=1pt] (c) -- (f); \draw[line width=1pt] (d) -- (f);
\node (o) at (\x+.15,\y+.6) {8};
\def\x{9}
\def\y{9}
\node[nod] (a) at (1+\x,\y) {};
\node[nod] (b) at (.707+\x,.707+\y) {};
\node[nod] (c) at (0+\x,1+\y) {};
\node[nod] (d) at (-.707+\x,.707+\y) {};
\node[nod] (e) at (-1+\x,0+\y) {};
\node[nod] (f) at (-.707+\x,-.707+\y) {};
\node[nod] (g) at (\x,-1+\y) {};
\node[nod] (h) at (.707+\x,-.707+\y) {};
\draw[line width=1pt] (f) -- (e); 
\draw[line width=1pt] (a) -- (b) -- (f) -- (g)-- (h);
\draw[line width=1pt] (c) -- (f); \draw[line width=1pt] (d) -- (f);
\node (o) at (\x+.3,\y-.3) {18};
\def\x{12}
\def\y{9}
\node[nod] (a) at (1+\x,\y) {};
\node[nod] (b) at (.707+\x,.707+\y) {};
\node[nod] (c) at (0+\x,1+\y) {};
\node[nod] (d) at (-.707+\x,.707+\y) {};
\node[nod] (e) at (-1+\x,0+\y) {};
\node[nod] (f) at (-.707+\x,-.707+\y) {};
\node[nod] (g) at (\x,-1+\y) {};
\node[nod] (h) at (.707+\x,-.707+\y) {};
\draw[line width=1pt] (f) -- (e); 
\draw[line width=1pt] (b) -- (f) -- (g)-- (h) -- (a);
\draw[line width=1pt] (c) -- (f) --(d); 
\node (o) at (\x+.3,\y-.3) {21};
\def\x{0}
\def\y{6}
\node[nod] (a) at (1+\x,\y) {};
\node[nod] (b) at (.707+\x,.707+\y) {};
\node[nod] (c) at (0+\x,1+\y) {};
\node[nod] (d) at (-.707+\x,.707+\y) {};
\node[nod] (e) at (-1+\x,0+\y) {};
\node[nod] (f) at (-.707+\x,-.707+\y) {};
\node[nod] (g) at (\x,-1+\y) {};
\node[nod] (h) at (.707+\x,-.707+\y) {};
\draw[line width=1pt] (b) -- (g) -- (f) -- (e); 
\draw[line width=1pt] (d) -- (f) -- (c);
\draw[line width=1pt] (h) -- (g) --(a); 
\node (o) at (\x,\y)  {5};
\def\x{3}
\def\y{6}
\node[nod] (a) at (1+\x,\y) {};
\node[nod] (b) at (.707+\x,.707+\y) {};
\node[nod] (c) at (0+\x,1+\y) {};
\node[nod] (d) at (-.707+\x,.707+\y) {};
\node[nod] (e) at (-1+\x,0+\y) {};
\node[nod] (f) at (-.707+\x,-.707+\y) {};
\node[nod] (g) at (\x,-1+\y) {};
\node[nod] (h) at (.707+\x,-.707+\y) {};
\draw[line width=1pt] (g) -- (f) -- (e); 
\draw[line width=1pt] (d) -- (f) -- (c) -- (b);
\draw[line width=1pt] (h) -- (g) -- (a); 
\node (o) at (\x+.2,\y)   {30};
\def\x{6}
\def\y{6}
\node[nod] (a) at (1+\x,\y) {};
\node[nod] (b) at (.707+\x,.707+\y) {};
\node[nod] (c) at (0+\x,1+\y) {};
\node[nod] (d) at (-.707+\x,.707+\y) {};
\node[nod] (e) at (-1+\x,0+\y) {};
\node[nod] (f) at (-.707+\x,-.707+\y) {};
\node[nod] (g) at (\x,-1+\y) {};
\node[nod] (h) at (.707+\x,-.707+\y) {};
\draw[line width=1pt] (g) -- (f) -- (e); 
\draw[line width=1pt] (d) -- (f) -- (c);
\draw[line width=1pt] (h) -- (g) -- (a) -- (b); 
\node (o) at (\x+.2,\y)    {30};
\def\x{9}
\def\y{6}
\node[nod] (a) at (1+\x,\y) {};
\node[nod] (b) at (.707+\x,.707+\y) {};
\node[nod] (c) at (0+\x,1+\y) {};
\node[nod] (d) at (-.707+\x,.707+\y) {};
\node[nod] (e) at (-1+\x,0+\y) {};
\node[nod] (f) at (-.707+\x,-.707+\y) {};
\node[nod] (g) at (\x,-1+\y) {};
\node[nod] (h) at (.707+\x,-.707+\y) {};
\draw[line width=1pt] (g) -- (f) -- (e) -- (d);
\draw[line width=1pt] (a) -- (f); 
\draw[line width=1pt] (f) -- (c) -- (b);
\draw[line width=1pt] (h) -- (g); 
\node (o) at (\x+.2,\y+.2)   {32};
\def\x{12}
\def\y{6}
\node[nod] (a) at (1+\x,\y) {};
\node[nod] (b) at (.707+\x,.707+\y) {};
\node[nod] (c) at (0+\x,1+\y) {};
\node[nod] (d) at (-.707+\x,.707+\y) {};
\node[nod] (e) at (-1+\x,0+\y) {};
\node[nod] (f) at (-.707+\x,-.707+\y) {};
\node[nod] (g) at (\x,-1+\y) {};
\node[nod] (h) at (.707+\x,-.707+\y) {};
\draw[line width=1pt] (g) -- (f) -- (e); 
\draw[line width=1pt] (d) -- (f) -- (c) -- (b);
\draw[line width=1pt] (a) -- (h) -- (g); 
\node (o) at (\x+.2,\y)   {79};
\def\x{15}
\def\y{6}
\node[nod] (a) at (1+\x,\y) {};
\node[nod] (b) at (.707+\x,.707+\y) {};
\node[nod] (c) at (0+\x,1+\y) {};
\node[nod] (d) at (-.707+\x,.707+\y) {};
\node[nod] (e) at (-1+\x,0+\y) {};
\node[nod] (f) at (-.707+\x,-.707+\y) {};
\node[nod] (g) at (\x,-1+\y) {};
\node[nod] (h) at (.707+\x,-.707+\y) {};
\draw[line width=1pt] (g) -- (f) -- (e); 
\draw[line width=1pt] (d) -- (f) -- (c); \draw[line width=1pt] (h) -- (b);
\draw[line width=1pt] (a) -- (h) -- (g); 
\node (o) at (\x+.2,\y)   {21};
\def\x{0}
\def\y{3}
\node[nod] (a) at (1+\x,\y) {};
\node[nod] (b) at (.707+\x,.707+\y) {};
\node[nod] (c) at (0+\x,1+\y) {};
\node[nod] (d) at (-.707+\x,.707+\y) {};
\node[nod] (e) at (-1+\x,0+\y) {};
\node[nod] (f) at (-.707+\x,-.707+\y) {};
\node[nod] (g) at (\x,-1+\y) {};
\node[nod] (h) at (.707+\x,-.707+\y) {};
\draw[line width=1pt] (g) -- (f) -- (e); 
\draw[line width=1pt] (d) -- (f) -- (c); 
\draw[line width=1pt] (b) -- (a) -- (h) -- (g); 
\node (o) at (\x+.2,\y)  {55};
\def\x{3}
\def\y{3}
\node[nod] (a) at (1+\x,\y) {};
\node[nod] (b) at (.707+\x,.707+\y) {};
\node[nod] (c) at (0+\x,1+\y) {};
\node[nod] (d) at (-.707+\x,.707+\y) {};
\node[nod] (e) at (-1+\x,0+\y) {};
\node[nod] (f) at (-.707+\x,-.707+\y) {};
\node[nod] (g) at (\x,-1+\y) {};
\node[nod] (h) at (.707+\x,-.707+\y) {};
\draw[line width=1pt] (g) -- (f) -- (e); 
\draw[line width=1pt] (c) -- (d) -- (f);
\draw[line width=1pt] (h) -- (g) -- (a); \draw[line width=1pt] (d) -- (b); 
\node (o) at (\x,\y)   {18};
\def\x{6}
\def\y{3}
\node[nod] (a) at (1+\x,\y) {};
\node[nod] (b) at (.707+\x,.707+\y) {};
\node[nod] (c) at (0+\x,1+\y) {};
\node[nod] (d) at (-.707+\x,.707+\y) {};
\node[nod] (e) at (-1+\x,0+\y) {};
\node[nod] (f) at (-.707+\x,-.707+\y) {};
\node[nod] (g) at (\x,-1+\y) {};
\node[nod] (h) at (.707+\x,-.707+\y) {};
\draw[line width=1pt] (g) -- (f) -- (e) -- (d); 
\draw[line width=1pt] (b) -- (c) -- (f);
\draw[line width=1pt] (h) -- (g) -- (a);
\node (o) at (\x+.2,\y)   {48};
\def\x{9}
\def\y{3}
\node[nod] (a) at (1+\x,\y) {};
\node[nod] (b) at (.707+\x,.707+\y) {};
\node[nod] (c) at (0+\x,1+\y) {};
\node[nod] (d) at (-.707+\x,.707+\y) {};
\node[nod] (e) at (-1+\x,0+\y) {};
\node[nod] (f) at (-.707+\x,-.707+\y) {};
\node[nod] (g) at (\x,-1+\y) {};
\node[nod] (h) at (.707+\x,-.707+\y) {};
\draw[line width=1pt] (g) -- (f) -- (e) -- (d); 
\draw[line width=1pt] (c) -- (f);
\draw[line width=1pt] (h) -- (g) -- (a) -- (b);
\node (o) at (\x+.2,\y) {60};
\def\x{12}
\def\y{3}
\node[nod] (a) at (1+\x,\y) {};
\node[nod] (b) at (.707+\x,.707+\y) {};
\node[nod] (c) at (0+\x,1+\y) {};
\node[nod] (d) at (-.707+\x,.707+\y) {};
\node[nod] (e) at (-1+\x,0+\y) {};
\node[nod] (f) at (-.707+\x,-.707+\y) {};
\node[nod] (g) at (\x,-1+\y) {};
\node[nod] (h) at (.707+\x,-.707+\y) {};
\draw[line width=1pt] (g) -- (f) -- (e); 
\draw[line width=1pt] (d) -- (f);
\draw[line width=1pt] (h) -- (g) -- (a); \draw[line width=1pt] (b) -- (c) -- (d); 
\node (o) at (\x,\y)   {79};
\def\x{15}
\def\y{3}
\node[nod] (a) at (1+\x,\y) {};
\node[nod] (b) at (.707+\x,.707+\y) {};
\node[nod] (c) at (0+\x,1+\y) {};
\node[nod] (d) at (-.707+\x,.707+\y) {};
\node[nod] (e) at (-1+\x,0+\y) {};
\node[nod] (f) at (-.707+\x,-.707+\y) {};
\node[nod] (g) at (\x,-1+\y) {};
\node[nod] (h) at (.707+\x,-.707+\y) {};
\draw[line width=1pt] (g) -- (f) -- (e) -- (d); 
\draw[line width=1pt] (b) -- (c) -- (f);
\draw[line width=1pt] (a) -- (h) -- (g);
\node (o) at (\x+.2,\y)  {126};
\def\x{0}
\def\y{0}
\node[nod] (a) at (1+\x,\y) {};
\node[nod] (b) at (.707+\x,.707+\y) {};
\node[nod] (c) at (0+\x,1+\y) {};
\node[nod] (d) at (-.707+\x,.707+\y) {};
\node[nod] (e) at (-1+\x,0+\y) {};
\node[nod] (f) at (-.707+\x,-.707+\y) {};
\node[nod] (g) at (\x,-1+\y) {};
\node[nod] (h) at (.707+\x,-.707+\y) {};
\draw[line width=1pt] (a) -- (h) -- (g) -- (f) -- (e); 
\draw[line width=1pt] (b) -- (h);
\draw[line width=1pt] (f) -- (d) -- (c); 
\node (o) at (\x,\y)   {79};
\def\x{3}
\def\y{0}
\node[nod] (a) at (1+\x,\y) {};
\node[nod] (b) at (.707+\x,.707+\y) {};
\node[nod] (c) at (0+\x,1+\y) {};
\node[nod] (d) at (-.707+\x,.707+\y) {};
\node[nod] (e) at (-1+\x,0+\y) {};
\node[nod] (f) at (-.707+\x,-.707+\y) {};
\node[nod] (g) at (\x,-1+\y) {};
\node[nod] (h) at (.707+\x,-.707+\y) {};
\draw[line width=1pt] (a) -- (h) -- (g) -- (f) -- (e); 
\draw[line width=1pt] (f) -- (d) -- (c) -- (b); 
\node (o) at (\x,\y)   {112};
\def\x{6}
\def\y{0}
\node[nod] (a) at (1+\x,\y) {};
\node[nod] (b) at (.707+\x,.707+\y) {};
\node[nod] (c) at (0+\x,1+\y) {};
\node[nod] (d) at (-.707+\x,.707+\y) {};
\node[nod] (e) at (-1+\x,0+\y) {};
\node[nod] (f) at (-.707+\x,-.707+\y) {};
\node[nod] (g) at (\x,-1+\y) {};
\node[nod] (h) at (.707+\x,-.707+\y) {};
\draw[line width=1pt] (b) -- (a) -- (h) -- (g) -- (f) -- (e); 
\draw[line width=1pt] (f) -- (d) -- (c);
\node (o) at (\x,\y) {207};
\def\x{9}
\def\y{0}
\node[nod] (a) at (1+\x,\y) {};
\node[nod] (b) at (.707+\x,.707+\y) {};
\node[nod] (c) at (0+\x,1+\y) {};
\node[nod] (d) at (-.707+\x,.707+\y) {};
\node[nod] (e) at (-1+\x,0+\y) {};
\node[nod] (f) at (-.707+\x,-.707+\y) {};
\node[nod] (g) at (\x,-1+\y) {};
\node[nod] (h) at (.707+\x,-.707+\y) {};
\draw[line width=1pt] (b) -- (a) -- (h) -- (g) -- (f) -- (e); 
\draw[line width=1pt] (f) -- (d);
\draw[line width=1pt] (a) -- (c);
\node (o) at (\x,\y) {30};
\def\x{12}
\def\y{0}
\node[nod] (a) at (1+\x,\y) {};
\node[nod] (b) at (.707+\x,.707+\y) {};
\node[nod] (c) at (0+\x,1+\y) {};
\node[nod] (d) at (-.707+\x,.707+\y) {};
\node[nod] (e) at (-1+\x,0+\y) {};
\node[nod] (f) at (-.707+\x,-.707+\y) {};
\node[nod] (g) at (\x,-1+\y) {};
\node[nod] (h) at (.707+\x,-.707+\y) {};
\draw[line width=1pt] (c) -- (b) -- (a) -- (h) -- (g) -- (f) -- (e); 
\draw[line width=1pt] (f) -- (d);
\node (o) at (\x,\y) {144};
\def\x{15}
\def\y{0}
\node[nod] (a) at (1+\x,\y) {};
\node[nod] (b) at (.707+\x,.707+\y) {};
\node[nod] (c) at (0+\x,1+\y) {};
\node[nod] (d) at (-.707+\x,.707+\y) {};
\node[nod] (e) at (-1+\x,0+\y) {};
\node[nod] (f) at (-.707+\x,-.707+\y) {};
\node[nod] (g) at (\x,-1+\y) {};
\node[nod] (h) at (.707+\x,-.707+\y) {};
\draw[line width=1pt] (c) -- (b) -- (a) -- (h) -- (g) -- (f) -- (e) -- (d); 
\node (o) at (\x,\y) {195};
\end{tikzpicture}
\caption{\label{t8} Trees of order 8 and the number of LV-equivalent nonisomorphic hypergraphs.}
\end{figure}

\noindent
{\bf Acknowledgement.} The author would like to thank Reinout Quispel, who found the condition for \eqref{DP} to be a 3-DP of the Lotka-Volterra system \eqref{LVsys}. He is also grateful to an anonymous referee who pointed out that the previous version of Lemma \ref{lem} did not seem to be powerful enough, which led to the subsequent propositions.

\end{document}